\newtheorem{defn}{Definition}
\newtheorem{thrm}{Theorem}
\newtheorem{cor}{Corollary}
\newtheorem{lma}{Lemma}
\newtheorem{remark}{Remark}
\title{Strategic Mitigation of Agent Inattention in Drivers with Open-Quantum Cognition Models}
\author{\IEEEauthorblockN{Qizi Zhang\IEEEauthorrefmark{1}, Venkata Sriram Siddhardh Nadendla\IEEEauthorrefmark{2}, S. N. Balakrishnan\IEEEauthorrefmark{1}, Jerome Busemeyer\IEEEauthorrefmark{3}}
\\[1ex]
\IEEEauthorblockA{\IEEEauthorrefmark{1}Dept. of Mechanical and Aerospace Engineering and \IEEEauthorrefmark{2}Dept. of Computer Science \\
Missouri Univ. of Science and Technology \\ Rolla, Missouri 65401. \\ Email: \{qzwtb, nadendla, bala\}@mst.edu
}
\\[1ex]
\IEEEauthorblockA{\IEEEauthorrefmark{3}Dept. of Psychological and Brain Sciences, \\ Indiana University Bloomington \\
Bloomington, IN 47405. \\
Email: jbusemey@indiana.edu}
\vspace{-5ex}
}
\begin{document}

\maketitle

\begin{abstract}
State-of-the-art driver-assist systems have failed to effectively mitigate driver inattention and had minimal impacts on the ever-growing number of road mishaps (e.g. life loss, physical injuries due to accidents caused by various factors that lead to driver inattention). This is because traditional human-machine interaction settings are modeled in classical and behavioral game-theoretic domains which are technically appropriate to characterize strategic interaction between either two utility maximizing agents, or human decision makers. Therefore, in an attempt to improve the persuasive effectiveness of driver-assist systems, we develop a novel strategic and personalized driver-assist system which adapts to the driver's mental state and choice behavior. First, we 
% model and investigate strategic driver-vehicle interaction where the car presents a persuasive recommendation to steer human driver towards safer driving decisions. We characterize driver decisions using an open-quantum system cognition model, which captures complex aspects of human decision making such as violations to classical law of total probability and incompatibility of certain mental representations of information. We 
propose a novel equilibrium notion in human-system interaction games, where the system maximizes its expected utility and human decisions can be characterized using any general decision model. Then we use this novel equilibrium notion to investigate the strategic driver-vehicle interaction game where the car presents a persuasive recommendation to steer the driver towards safer driving decisions. We assume that the driver employs an open-quantum system cognition model, which captures complex aspects of human decision making such as violations to classical law of total probability and incompatibility of certain mental representations of information. We present closed-form expressions for players' final responses to each other's strategies so that we can numerically compute both pure and mixed equilibria. Numerical results are presented to illustrate both kinds of equilibria.
\end{abstract}

\begin{IEEEkeywords}
Game theory, open quantum system model, mixed-strategy equilibrium, quantum cognition.
\end{IEEEkeywords}
% It will be proved in Section \ref{section:formulation} that $L_{(m,n)}$ leads to a unique stationary solution if they are defined as $L_{(m,n)}=\ket{m}\bra{n}$ with $\gamma_{(m,n)}$ being the $(m,n)$ element of the cognitive matrix, which will be introduced . Here $\ket{m}$ is a column vector whose $m$th element is 1 and the other elements are 0. Note that $\bra{n}$ is obtained by transposing $\ke
\section{Introduction}

Driver inattention is a dangerous phenomenon that can arise because of various reasons: distractions, drowsiness due to fatigue, less reaction time due to speeding, and intoxication. The consequences of inattentive driving can severely affect the driver's safety even under normal road conditions, and can be devastating in terms of life-loss and/or long-lasting injuries. According to NHTSA's latest revelations \cite{NHTSA}, in 2019, 3142 lives were claimed by distracted driving, 795 lives were claimed by drowsy driving, 9378 deaths were due to speeding, and 10,142 deaths were due to drunk driving, all in the United States alone. Therefore, several types of driver-assist systems have been developed and deployed in modern vehicles to mitigate inattentiveness. However, traditional driver-assist technologies are static and not personalized, which are insufficient to handle the situations in futuristic transportation systems with mostly connected and/or autonomous vehicles. For example, several deadly accidents have been reported where the Tesla driving assistants were working normally but the drivers were inattentive \cite{Tesla, Tesla2}. As per SAE standard J3016 \cite{SAE}, the state-of-the-art vehicles mostly fall under Levels 2/3, which continue to demand significant driver attention (e.g. Tesla autopilot \cite{Tesla3}), especially in uncertain road and weather conditions.  Therefore, there is a strong need to design dynamic, data-driven driver-alert systems which present effective interventions in a strategic manner based on its estimates of the driver's attention level and physical conditions. 

However, the design of strategic interventions to mitigate the ill effects of driver inattention is quite challenging due to three fundamental reasons. 
% which rely on dynamic, data-driven intelligence in order to estimate both road conditions as well as driver's cognitive state is . 
Firstly, the \emph{driver may not follow the vehicle's recommendations} (i) if the driver is inattentive, (ii) if the driver does not trust the vehicle's recommendations, and/or (iii) if the recommendation signal is not accurate enough to steer driver's choices (e.g. the driver may not stop the vehicle because of a false alarm). Secondly, the \emph{persuasive effectiveness of vehicle's recommendations is technically difficult to evaluate} due to its complex/unknown relationship with the driver's (i) attention level \cite{koopman2017autonomous}, (ii) own judgment/prior of road conditions \cite{woide2019methodical}, and (iii) trust on the vehicle's recommendation system \cite{choi2015investigating}. In addition, it is difficult to mathematically model and estimate these three terms \cite{nishigaki2019driver,zyner2017long,abe2006alarm}. Finally, there is strong evidence within the psychology literature that \emph{human decisions exhibit several anomalies to traditional decision theory}. Examples include deviations from expected utility maximization such as Allais paradox \cite{tversky1992advances}, Ellsberg paradox \cite{ellsberg1961risk}, violations of transitivity and/or independence between alternatives \cite{busemeyer1993decision}; and deviations from classical Kolmogorov probability theory such as conjunction fallacy \cite{franco2009conjunction}, disjunction fallacy \cite{young2007potential}, and violation of sure thing principle \cite{khrennikov2009quantum}.

There have been a few relevant efforts in the recent literature where both the driver and the driver-assist system interact in a game theoretic setting. These efforts can be broadly classified into two types: (i) the \emph{direct method} where the system uses its on-board AI to directly control the vehicle, and (ii) the \emph{indirect method} where the system indirectly controls the vehicle via relying on the driver to make decisions. On the one hand, Flad \emph{et al.} proposed a direct method that models driver steering motion as a sequence of motion primitives so that the aims and steering actions of the driver can be predicted and then the optimal torque can be calculated \cite{7929390}. Another example that proposes a direct method is by Na and Cole in \cite{na2014game}, where four different paradigms were investigated: (i) decentralized, (ii) non-cooperative Nash, (iii) non-cooperative Stackelberg, and (iv) cooperative Pareto, to determine the most effective method to model driver reactions in collision avoidance systems. 
% Besides applying game theory between the driver and the car, research efforts were also made to apply game theory between vehicles. Karimi and Vahidi adopted a level-k game framework for predicting the strategy of the neighboring vehicles in lane change and merge maneuvers \cite{karimi2020receding}. 
Although direct methods can mimic driver actions, they certainly do not consider the driver's cognition state (in terms of preferences, biases and attention) and no intervention was designed/implemented to mitigate inattention. On the other hand, indirect methods have bridged this gap via considering driver's cognition state into account. Lutes \emph{et al.} modeled driver-vehicle interaction as a Bayesian Stackelberg game, where the on-board AI in the vehicle (leader) presents binary signals (no-alert/alert) based on which the driver (follower) makes a binary decision (continue/stop) regarding controlling the vehicle on a road \cite{lutes2020perfect}. This work and \cite{lutes2020perfect} share the same setting of unknown road condition and binary actions of two players, and also introduce a non-negative exponent parameter in the overall driver’s utility to capture his/her level of attention. The difference is that \cite{lutes2020perfect} still follows the traditional game theory framework of maximizing payoffs while this work extends the traditional framework in which the players do not necessarily maximize payoffs.  Schwarting \emph{et al.} integrated Social Value Orientation (SVO) into autonomous-vehicle decision making. Their model quantifies the degree of an agent’s selfishness or altruism in order to predict the social behavior of other drivers. They modeled interactions between agents as a best-response game wherein each agent negotiates to maximize their own utilities \cite{schwarting2019social}. However, all the human players in the game of the above research are still assumed to be rational players maximizing utilities, even though the utilities are modified to capture attention level or social behavior, whether by a non-negative exponent parameter or by SVO. The present work bridges this gap by directly considering the driver as an agent who does not seek to maximize payoff, but instead uses a quantum-cognition based decision process to make decisions.

% human cognition and decision modeling 
% it is difficult to develop a decision making model of the driver relating the attentiveness, belief on road condition, and car recommendations to the behavior of the driver since human cognition and decision making have been shown by numerous psychological studies to present 

% \textbf{State-of-the-Art Literature}: Also discuss how these approaches did not address the challenges that we are addressing in this paper... Organize the survey in the same order as the list of challenges provided in the above paragraph... if this is too long, leave it as a separate section, but clearly state that the past work did not address the challenge that we are going to address in our paper. 

% \textbf{Our Contributions}
% How did we address some of the unresolved issues from the past in a novel manner?
Note that most of the past literature focused on addressing each of these challenges independently. The main contribution of this paper is that we address all the three challenges jointly in our driver-vehicle interaction setting. In Section \ref{section:formulation}, we propose a novel strategic driver-vehicle interaction framework where all the aforementioned challenges are simultaneously addressed in a novel game-theoretic setting. We assume that the vehicle constructs recommendations so as to balance a prescribed trade-off between information accuracy and persuasive effectiveness. On the other hand, we model driver decisions using an open quantum cognition model that considers driver attention as model parameter and incorporates the driver prior regarding road condition into the initial state. In Section \ref{section:lindblad}, we present a closed-form expression for the cognition matrix in the driver's open quantum cognition model. Given that the agent rationalities are fundamentally different from each other (vehicle being a utility-maximizer, and driver following an open quantum cognition model), we also propose a novel equilibrium notion, inspired by Nash equilibrium, and compute both pure and mixed equilibrium strategies for the proposed driver-vehicle interaction game in Sections \ref{section:pure} and \ref{section:mix} respectively. Finally, we analyze the impact of driver inattention on the equilibrium of the proposed game.

\section{Strategic Driver-Vehicle Interaction Model}
\label{section:formulation}

In this section, we model the strategic interaction between a driver-assist system (car) and an inattentive driver as a one-shot Bayesian game. We assuming that the physical conditions of the road are classified into two states, namely, \emph{safe} (denoted as $S$) and \emph{dangerous} (denoted as $D$). The vehicle can choose one of the two signaling strategies: alert the driver (denoted as $A$), or no-alert (denoted as $N$) based on its belief about the road state. Meanwhile, based on the driver's belief about the road state and his/her own mental state (which defines driver's type), the driver chooses to either \emph{continue} driving (denoted as $C$), or \emph{stop} the vehicle (denoted as $S$). Note that although the letter $S$ is used to denote both road state being safe and driver decision being stop, the reader can easily decipher the notation's true meaning from context.

Depending on the true road state, we assume that the vehicle (row player) and the driver (column player) obtain utilities as defined in Table \ref{tab:tab1}. When the road is dangerous, we expect the car to alert the driver. If the car does not alert, it will get a low payoff. Furthermore, we assume this low payoff depends on the driver's action. If the driver stops, the payoff is only slightly low because no damage or injury is incurred. If the driver continues to drive, the payoff is very low because damage or injury is incurred. When the road is safe, the correct action for the car is not to alert. If the car does not alert, it will get a high payoff. This high payoff depends on the driver's action. If the driver stops, the payoff is only slightly high because it does not help the driver and an unnecessary stop is waste of time and energy. If the driver continues to drive, the reward is very high because everything is fine.
% Note that both the tables present specific values to players' utilities, as an example which is used in our simulation results. 

\begin{table}[htbp]
\centering
\begin{tabular}{r l}
\begin{tabular}{r|c|c|}
\multicolumn{1}{r}{} &  \multicolumn{1}{c}{C} & \multicolumn{1}{c}{S} 
\\
\cline{2-3}
N & $a_{1,s}$, $a_{2,s}$ & $b_{1,s}$, $b_{2,s}$ 
\\
\cline{2-3}
A & $c_{1,s}$, $c_{2,s}$ & $d_{1,s}$, $d_{2,s}$ 
\\
\cline{2-3}
\end{tabular}
&
\begin{tabular}{ r|c|c| }
\multicolumn{1}{r}{}
 &  \multicolumn{1}{c}{C}
 & \multicolumn{1}{c}{S} \\
\cline{2-3}
N & $a_{1,d}$, $a_{2,d}$ & $b_{1,d}$, $b_{2,d}$ \\
\cline{2-3}
A & $c_{1,d}$, $c_{2,d}$ & $d_{1,d}$, $d_{2,d}$ \\
\cline{2-3}
\end{tabular}
\end{tabular}
\caption{Utilities of the car and the driver when the road is safe (left) and dangerous (right)}
\label{tab:tab1}
\end{table}

% \begin{table}[htbp]
% \begin{center}
% \begin{tabular}{ r|c|c| }
% \multicolumn{1}{r}{}
%  &  \multicolumn{1}{c}{Continue}
%  & \multicolumn{1}{c}{Stop} \\
% \cline{2-3}
% No Alert & ($a_{1,d}$, $a_{2,d}$) = (25, 25) & ($b_{1,d}$, $b_{2,d}$) = (30, 60) \\
% \cline{2-3}
% Alert & ($c_{1,d}$, $c_{2,d}$) = (75, 25) & ($d_{1,d}$, $d_{2,d}$) = (85, 85) \\
% \cline{2-3}
% \end{tabular}
% \end{center}
% \caption{Utilities of the car and the driver when the road is dangerous}
% \label{tab:tab2}
% \end{table}

% The two players in this game are the human driver and the vehicle's artificial intelligence (AI). We will use the car instead of the car AI agent for brevity. 

In this paper, we assume that both the car and the driver does not know the true road state. While the car relies on its observations from on-board sensors and other extrinsic information sources (e.g. nearby vehicles, road-side infrastructure) and its on-board machine learning algorithm for road judgment to construct its belief $q \in [0,1]$ regarding the road state being safe, we assume that the driver constructs a belief $p \in [0,1]$ regarding the road state being safe based on what he/she sees and his/her prior experiences.
% The car can alert the driver when it detects any danger on the road (This detection may be wrong). Besides judging the road condition correctly from sensory data (not the focus of this work), the car needs to estimate the driving behavior of the human and guide the driver for safe driving. 
Furthermore, as in the case of a traditional decision-theoretic agent, we assume that the car seeks to maximize its expected payoff. If $p_C$ is the probability with which the driver chooses $C$, then the expected payoff for choosing $N$ and $A$ at the car are respectively given by
\begin{equation}
\begin{array}{lcl}
U_N(p_C) & = & p_C \Big[ a_{1, s} q + (1-q) a_{1, d} \Big] 
\\[1.5ex]
&& \qquad + \ (1- p_C) \Big[ b_{1, s} q + (1-q) b_{1, d} \Big]
\end{array}
\label{Eqn: Car's Utilities - N}
\end{equation}
and 
\begin{equation}
\begin{array}{lcl}
U_A(p_C) & = & p_C \Big[ c_{1, s}q + (1-q) c_{1, d} \Big] 
\\[1.5ex]
&& \qquad + \ (1- p_C) \Big[ d_{1, s}q + (1-q) d_{1, d} \Big].
\end{array}
\label{Eqn: Car's Utilities - A}
\end{equation}

The calculation of $p_C$ is complicated by the fact that the driver exhibits bounded rationality. Fortunately, the bounded rationality can be characterized by the open quantum system cognition model, as described below. 

\subsection{Driver's Open Quantum Cognition Model}
\label{subsec: oq}

In this subsection, we present the basic elements of the open quantum system cognition model \cite{martinez2016quantum}, and how it is applied to model driver behavior. The cognitive state of the agent is described by a mixed state or density matrix $\rho$, which is a statistical mixture of pure states. Formally it is a Hermitian, non-negative operator, whose trace is equal to one. Under the Markov assumption  (the evolution $\mathcal{E}$ can be factorized as $\mathcal{E}_{t_2,t_0}=\mathcal{E}_{t_2,t_1}\mathcal{E}_{t_1,t_0}$ given a sequence of instants $t_0$, $t_1$, $t_2$), one can find the most general form of this time evolution based on a time local master equation $d\rho/dt=\mathcal{L}[\rho]$, with $\mathcal{L}$  a differential superoperator (it acts over operators) called Lindbladian, which is defined as follows.   
\begin{defn}[\cite{rivas2012open}]
The Lindblad-Kossakowski equation for any open quantum system is defined as
\begin{equation}\label{eq:lindblad}
\begin{array}{l}
\displaystyle \frac{d\rho}{dt} = -i(1-\alpha)[H,\rho] + \displaystyle \alpha\sum_{m,n} \gamma_{m,n} \Big[ L_{m,n}\rho L^\dagger_{m,n} 
\\[1ex]
\displaystyle \qquad \qquad \qquad \qquad \qquad \qquad -\frac{1}{2}\left\{L^\dagger_{m,n}L_{m,n}, \ \rho \right\} \Big],
\end{array}
\end{equation}
where
\begin{itemize}
\setlength{\itemsep}{1ex}
\item $H$ is the Hamiltonian of the system,
\item $[H,\rho] = H\rho-\rho H $ is the commutation operation between the Hamiltonian $H$ and the density operator $\rho$,
\item $\gamma_{m,n}$ are $(m,n)^{th}$ entry of some positive semidefinite matrix (denoted as $C$),
\item $L_{m,n}$ is a set of linear operators,
\item $\left\{ L^\dagger_{m,n}L_{m,n}, \ \rho \right\} = L^\dagger_{m,n}L_{m,n}\rho+\rho L^\dagger_{m,n}L_{m,n}$ denotes the anticommutator. The superscript $\dagger$ represents the adjoint (transpose and complex conjugate) operation. 
\end{itemize}
\label{Defn: Lindblad-Kossakowski}
\end{defn}

In this paper, we set 
\begin{equation}
L_{(m,n)}=\ket{m}\bra{n}
\end{equation}
as defined in \cite{martinez2016quantum}, where, for any $m$, $\ket{m}$ is a column vector whose $m$th entry is 1 and the other entries are 0. Note that $\bra{n}$ is obtained by transposing $\ket{n}$ and then taking its complex conjugate. Thus $\bra{n}$ is a row vector whose $n$th entry is 1 and 0 otherwise.

The second term on the right side of Equation \eqref{eq:lindblad} contains the dissipative term responsible for the irreversibility in the decision-making process \cite{martinez2016quantum}, weighted by the coefficient $\alpha$ such that the parameter $\alpha\in [0,1]$ interpolates between the von Neumann evolution $(\alpha=0)$ and the completely dissipative dynamics $(\alpha=1)$. Furthermore, the term $\gamma_{(m,n)}$ is the $(m,n)$-th entry in the cognitive matrix $C(\lambda,\phi)$. This cognitive matrix $C(\lambda,\phi)$ is formalized as the linear combination of two matrices $\Pi(\lambda)$ and $B$, which are associated to the profitability comparison between alternatives and the formation of beliefs, respectively \cite{martinez2016quantum}: 
\begin{equation}\label{eq:cognition matrix}
\begin{array}{lcl}
C(\lambda,\phi) & = & \begin{bmatrix} 
\gamma_{(1,1)} & \cdots & \gamma_{(1,N)}
\\
\vdots & \ddots & \vdots
\\
\gamma_{(N,1)} & \cdots & \gamma_{(N,N)}
\end{bmatrix}
\\[6ex]
& = & (1-\phi) \cdot \Pi^T(\lambda) \ + \ \phi \cdot B^T,
\end{array}
\end{equation}
where $\phi\in[0,1]$ is a parameter assessing the relevance of the formation of beliefs during the decision-making process, $\Pi(\lambda)$ is the transition matrix where $(i,j)$-th entry $\pi_{ij}(\omega_l)$ is the probability that the decision maker switches from strategy $s_i$ to $s_j$ for a given state of the world $\omega_l$, and $B$ matrix allows the driver to introduce a change of belief about the state of the world in the cognitive process by jumping from one connected component associated to a particular state of the world $\omega_k\in\Omega$ to the connected component associated to another one $\omega_l\in\Omega$, while keeping the action $s_i$ fixed. The superscript $T$ denotes the transpose matrix. Finally, the dimension of the square matrix $C(\lambda,\phi)$, i.e. $N$, can be inferred from the detailed discussion given below.

\begin{figure*}[!t]
% \begin{center}
% \includegraphics[width=0.475\textwidth]{pi_matrix.png}
% \end{center}
\centering
\begin{equation}
\begin{array}{lcl}
\Pi(\lambda) 
% & = & 
% \resizebox{0.85\textwidth}{!}{
% \begin{tabular}{ r|c|c|c|c|c|c|c|c| }
% \multicolumn{1}{r}{} &  \multicolumn{1}{c}{SNC} & \multicolumn{1}{c}{SNS} & \multicolumn{1}{c}{SAC} & \multicolumn{1}{c}{SAS} & \multicolumn{1}{c}{DNC} & \multicolumn{1}{c}{DNS} & \multicolumn{1}{c}{DAC} & \multicolumn{1}{c}{DAS}
% \\
% \cline{2-9}
% SNC & \pi_{(C \rightarrow C |SN)} & \pi_{(C \rightarrow S |SN)} & 0 & 0 & 0 & 0 & 0 & 0
% \\
% \cline{2-9}
% SNS & \pi_{(S \rightarrow C |SN)} & \pi_{(S \rightarrow S |SN)} & 0 & 0 & 0 & 0 & 0 & 0
% \\
% \cline{2-9}
% SAC & 0 & 0 & \pi_{(C \rightarrow C |SA)} & \pi_{(C \rightarrow S |SA)} & 0 & 0 & 0 & 0
% \\
% \cline{2-9}
% SAS & 0 & 0 & \pi_{(S \rightarrow C |SA)} & \pi_{(S \rightarrow S |SA)} & 0 & 0 & 0 & 0
% \\
% \cline{2-9}
% DNC & 0 & 0 & 0 & 0 & \pi_{(C \rightarrow C |DN)} & \pi_{(C \rightarrow S |DN)} & 0 & 0
% \\
% \cline{2-9}
% DNS & 0 & 0 & 0 & 0 & \pi_{(S \rightarrow C |DN)} & \pi_{(S \rightarrow S |DN)} & 0 & 0
% \\
% \cline{2-9}
% DAC & 0 & 0 & 0 & 0 & 0 & 0 & \pi_{(C \rightarrow C |DA)} & \pi_{(C \rightarrow S |DA)}
% \\
% \cline{2-9}
% DAS & 0 & 0 & 0 & 0 & 0 & 0 & \pi_{(S \rightarrow C |DA)} & \pi_{(S \rightarrow S |DA)} 
% \\
% \cline{2-9}
% \end{tabular}
% }
% \\[13ex]
& = & \quad \begin{bmatrix} 
\mu(\lambda) & 1-\mu(\lambda) & 0 & 0 & 0 & 0 & 0 & 0 \\
\mu(\lambda) & 1-\mu(\lambda) & 0 & 0 & 0 & 0 & 0 & 0 \\
 0 & 0 & \nu(\lambda) & 1-\nu(\lambda) & 0 & 0 & 0 & 0 \\  0 & 0  & \nu(\lambda) & 1-\nu(\lambda) & 0 & 0 & 0 & 0 \\  0 & 0 & 0 & 0 & \xi(\lambda) & 1-\xi(\lambda) & 0 & 0 \\ 0 & 0 & 0 & 0 & \xi(\lambda) & 1-\xi(\lambda) & 0 & 0 \\ 0 & 0 & 0 & 0 & 0 & 0 & o(\lambda) & 1-o(\lambda) \\ 0 & 0 & 0 & 0 & 0 & 0 & o(\lambda) & 1-o(\lambda)
\end{bmatrix}.
\end{array}
\label{eq:PI}
\end{equation}
\end{figure*}

At the driver, the world state primarily consists of two components: (i) the road condition, and (ii) the car's action, i.e., the set of world states of the driver is $\Omega=\{SN, SA, DN, DA\}$ where the first letter represents road condition and the second letter represents car action. The utilities of the driver for choosing a strategy at a world state are as follows:
\begin{equation*}
\begin{array}{lclclcl}
u(C|SN) & = & a_{2,s}, & \ & u(S|SN) & = & b_{2,s},
\\[1ex]
u(S|SA) & = & c_{2,s}, & \ &  u(S|SA) & = & d_{2,s},
\\[1ex]
u(C|DN) & = & a_{2,d}, & \ &  u(S|DN) & = & b_{2,d},
\\[1ex]
u(S|DA) & = & c_{2,d}, & \ &  u(S|DA) & = & d_{2,d}.
\end{array}
\end{equation*}

We choose the basis of the road-car-driver system spanning the space of states to be
\begin{equation}
\begin{array}{l}
\Big\{ \ket{e_1},\ket{e_2},\ket{e_3},\ket{e_4},\ket{e_5},\ket{e_6},\ket{e_7},\ket{e_8} \Big\}
\\[2ex]
= \Big\{ \ket{SNC}, \ \ket{SNS}, \ \ket{SAC}, \ \ket{SAS}, \ \ket{DNC}, \ \ket{DNS},
\\[1ex]
\qquad \qquad \qquad \ket{DAC}, \ \ket{DAS} \Big\}.
\end{array}
\label{Eqn: Basis States}
\end{equation}

Next we define the transition matrix $\Pi(\lambda)$. If the utility of the decision maker by choosing strategy $s_i$ at the world state of $\omega_l$ is $u(s_i|\omega_l)$, the transition probability that the decision maker would switch to strategy $s_i$ at time step $k+1$ from strategy $s_j$ at time step $k$ is given in the spirit of Luce’s choice axiom \cite{luce1977choice,luce2012individual,yellott1977relationship}:
\begin{equation}
\label{eq:luce}
\pi_{(s_j \rightarrow s_i |\omega_l)} = P(s_i|s_j,\omega_l)=\frac{u(s_i|\omega_l)^\lambda}{\displaystyle \sum_{j=1}^{N_S}u(s_j|\omega_l)^\lambda}, 
\end{equation}
where the exponent $\lambda\geq 0$ measures the decision maker’s ability to discriminate the profitability among the different options. When $\lambda=0$, each strategy $s_i\in S$  has the same probability of being chosen ($1/N_S$), and when $\lambda\rightarrow\infty$ only the dominant alternative is chosen. There are two implications in this formulation of $P(s_i|s_j,\omega_l)$: (1) $u(s_i|\omega_l)\geq 0$ to avoid negative $P(s_i|s_j,\omega_l)$; (2) $P(s_i|s_j,\omega_l)$ only depends on the destination $s_i$ and does not depend on the starting point $s_j$.

Below are the probabilities needed for the $\Pi$ matrix:
\begin{equation*}
\begin{array}{lr}
\mu(\lambda) = \displaystyle \frac{a^\lambda_{2,s}}{a^\lambda_{2,s}+b^\lambda_{2,s}}, & 
\nu(\lambda) = \displaystyle \frac{c^\lambda_{2,s}}{c^\lambda_{2,s}+d^\lambda_{2,s}}, 
\\[2ex]
\xi(\lambda) = \displaystyle \frac{a^\lambda_{2,d}}{a^\lambda_{2,d}+b^\lambda_{2,d}}, & 
o(\lambda) = \displaystyle \frac{c^\lambda_{2,d}}{c^\lambda_{2,d}+d^\lambda_{2,d}},
\end{array}
\end{equation*}
where
\begin{itemize}
\setlength{\itemsep}{1ex}
\item $\mu(\lambda)$ is the probability that driver picks $C$ when he/she assumes that road state is $S$ and the car chooses $N$,
\item $\nu(\lambda)$ is the probability that driver will pick $C$ when he/she assumes that road state is $S$ and the car chooses $A$,
\item $\xi(\lambda)$ is the probability that driver will pick $C$ when he/she assumes that road state is $D$ and the car chooses $N$,
\item $o(\lambda)$ is the probability that driver will pick $C$ when he/she assumes that road state is $D$ and the car chooses $A$.
\end{itemize}
Equation \eqref{eq:PI} puts all the terms together in a matrix form and demonstrates the physical meaning of the row and column labels in $\Pi(\lambda)$. 

The $H$ matrix in Equation \eqref{eq:lindblad} is set as in \cite{martinez2016quantum}. When the elements of the $\Pi(\lambda)$ is nonzero, the elements of $H$ in the same position is 1; Otherwise it is zero. Thus, the $H$ matrix is 
\begin{equation}\label{eq:H}
\begin{array}{lcl}
H & = &
\begin{bmatrix} 
1 & 1 & 0 & 0 & 0 & 0 & 0 & 0 \\
1 & 1 & 0 & 0 & 0 & 0 & 0 & 0 \\
 0 & 0 & 1 & 1 & 0 & 0 & 0 & 0 \\  0 & 0  & 1 & 1 & 0 & 0 & 0 & 0 \\  0 & 0 & 0 & 0 & 1 & 1 & 0 & 0 \\ 0 & 0 & 0 & 0 & 1 & 1 & 0 & 0 \\ 0 & 0 & 0 & 0 & 0 & 0 & 1 & 1 \\ 0 & 0 & 0 & 0 & 0 & 0 & 1 & 1
\end{bmatrix}
\\[11ex]
& = & \begin{bmatrix} 
1 & 1 \\1 & 1\end{bmatrix}\oplus\begin{bmatrix} 
1 & 1 \\1 & 1\end{bmatrix}\oplus\begin{bmatrix} 
1 & 1 \\1 & 1\end{bmatrix}\oplus\begin{bmatrix} 
1 & 1 \\1 & 1\end{bmatrix}.
\end{array}
\end{equation}

% \begin{figure*}
% \begin{equation}
%     \label{eq:PI}
%     \Pi(\lambda)=\begin{bmatrix} 
% \mu(\lambda) & 1-\mu(\lambda) & 0 & 0 & 0 & 0 & 0 & 0 \\
% \mu(\lambda) & 1-\mu(\lambda) & 0 & 0 & 0 & 0 & 0 & 0 \\
%  0 & 0 & \nu(\lambda) & 1-\nu(\lambda) & 0 & 0 & 0 & 0 \\  0 & 0  & \nu(\lambda) & 1-\nu(\lambda) & 0 & 0 & 0 & 0 \\  0 & 0 & 0 & 0 & \xi(\lambda) & 1-\xi(\lambda) & 0 & 0 \\ 0 & 0 & 0 & 0 & \xi(\lambda) & 1-\xi(\lambda) & 0 & 0 \\ 0 & 0 & 0 & 0 & 0 & 0 & o(\lambda) & 1-o(\lambda) \\ 0 & 0 & 0 & 0 & 0 & 0 & o(\lambda) & 1-o(\lambda)
% \end{bmatrix}
% \end{equation}
% \end{figure*}

In this paper, we set $\phi=0$ for the following two reasons: (1) Since the world state of the driver is mainly the action of the car and the action of the car is known when calculating the equilibrium, the driver does not need to form such a belief; (2) We are considering a one-shot game and we can assume the road condition does not change in one game, i.e., we are only considering short-time dynamic. The $B$ matrix is zeroed out and its content is not described here. Thus $C=\Pi$ and we set $\gamma_{m,n}=C_{m,n}$ in Equation \eqref{eq:lindblad}.

\subsection{Pure and Mixed Strategy Equilibria}
\label{section:eq}

For the sake of simplicity, let us denote the car as Agent 1, and the driver as Agent 2 without any loss of generality. Since the car seeks to maximize its expected payoff given that the driver chooses a strategy $s_2 \in \{C,S\}$, it is natural that the car's final response $FR_1(s_2)$ is its best response that maximizes its expected payoff given in Equations \eqref{Eqn: Car's Utilities - N} and \eqref{Eqn: Car's Utilities - A}, i.e., 
$$FR_1(s_2) = BR_1(s_2) \left( \triangleq \displaystyle \max_{s_1 \in \{ N,A \} } U_{s_1}(s_2) \right).$$ 

On the contrary, driver's decisions are governed by the open quantum system model. If we denote the steady-state solution of Equation \eqref{eq:lindblad} as $OQ_{pure}(s_1;\alpha,\lambda)$ for a given car's strategy $s_1\in\{A,N\}$, the final response of the driver is defined as 
$$FR_2(s_1)=OQ_{pure}(s_1;\alpha,\lambda),$$ 
where $\alpha$ and $\lambda$ are driver's model parameters in Equations \eqref{eq:lindblad} and \eqref{eq:luce} respectively. Then the (pure-strategy) equilibrium of this game is defined as follows.
\begin{defn}
A strategy profile $(s_1^*,s_2^*)\in\{A,N\}\times\{C,S\}$ is a \textbf{pure strategy equilibrium} if and only if $s_1^*=BR_1(s_2^*)$ and $s_2^*=OQ_{pure}(s_1^*;\alpha,\lambda)$. 
\label{Defn: Pure Strategy Equilibrium}
\end{defn}

On the contrary, the concept of mixed strategy equilibrium is actually more natural to the open-quantum-system model since the solution tells the probability of taking various actions instead of indicating a particular action. The open quantum system model directly gives a mixed strategy. Let the mixed strategy of the driver is denoted as $\sigma_2 = (p_C,1-p_C)$ where $p_C$ is the probability that the driver chooses to continue. Similarly, let the car's mixed strategy be denoted as $\sigma_1 = (p_A, 1-p_A)$, where $p_A$ is the probability that the car chooses to alert. Then, a mixed strategy profile is denoted as $(\sigma_1, \sigma_2)$. In such a mixed strategy setting, the car's final response is its best mixed-strategy response, i.e. 
$$FR_1(\sigma_2)=BR_1(\sigma_2).$$ 
Similarly, the final response of the driver is obtained from the steady-state solution of Eq. \ref{eq:lindblad}, i.e. 
$$FR_2(\sigma_1)=OQ_{mix}(\sigma_1;\alpha,\lambda).$$ 
Then the mixed-strategy equilibrium of this game is defined as follows.
\begin{defn}
A strategy profile $(\sigma_1^*,\sigma_2^*)$ is an \textbf{mixed-strategy equilibrium} if and only if $\sigma_1^* = BR_1(\sigma_2^*)$ and $\sigma_2^* = OQ_{mix}(\sigma_1^*;\alpha,\lambda)$. 
\label{Defn: Mixed Strategy Equilibrium}
\end{defn}

\begin{figure}[!t]
\centering
\includegraphics[width=0.49\textwidth]{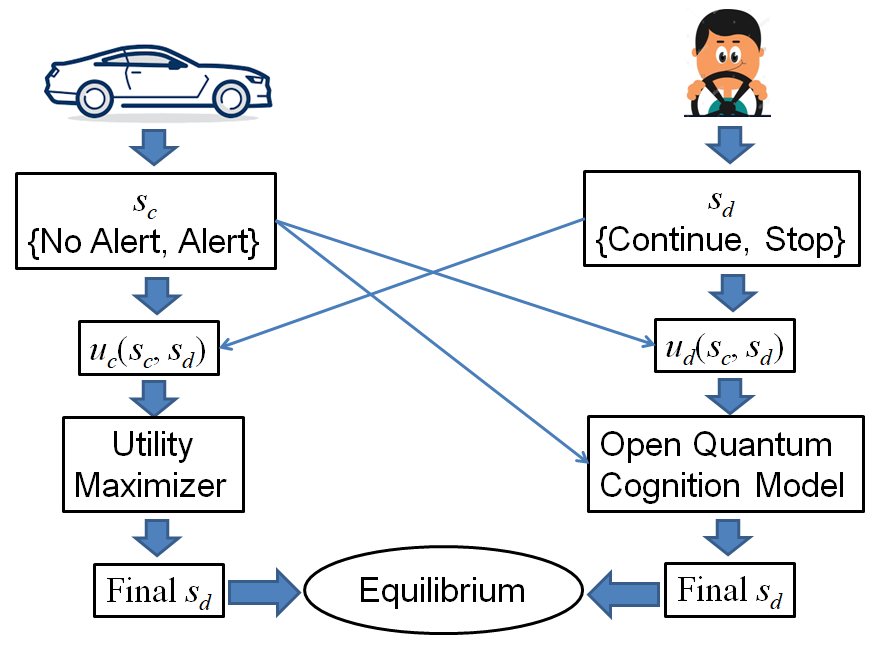}
\caption{Illustration of the car-driver interaction game}
\label{Fig: illustrate}
\end{figure}

Note that the above equilibrium notions presented in Definitions \ref{Defn: Pure Strategy Equilibrium} and \ref{Defn: Mixed Strategy Equilibrium} are novel and different from traditional equilibrium notions in game theory. This is because our game comprises of two different players: (i) the car modeled as an expected utility maximizer, and (ii) the driver modeled using open quantum cognition equation, as is illustrated in Figure \ref{Fig: illustrate}. However, our equilibrium notions are both inspired from the traditional definition of Nash equilibrium, and are defined using players' final responses as opposed to best responses in the Nash sense. By doing so, we can easily expand traditional equilibrium notions to any strategic setting where heterogeneous entities interact in a competitive manner.

\section{Driver's Final Response}
\label{section:lindblad}

Note that the dependent variable $\rho$ in Equation \eqref{eq:lindblad} is a matrix. In order to obtain the analytical solution, we vectorize $\rho$ by stacking its columns one on another to obtain vector $\vv{\rho}$. Thus, the vectorized version of Definition \ref{Defn: Lindblad-Kossakowski} is as follows.
\begin{defn}
The vectorized form for Lindblad-Kossakowski equation is given by
\begin{equation}
\label{eq:vectorize}
\displaystyle \frac{d\vv{\rho}}{dt} = \left[ -i(1-\alpha)\vv{H}+\alpha\vv{L} \right] \vv{\rho},
\end{equation}
where $I_N$ is the $N\times N$ identity matrix,
\begin{equation}\label{eq:vecH}
\vv{H} = H\otimes I_N-I_N\otimes H^T,
\end{equation}
\begin{equation}
\vv{L} = \displaystyle \sum_{m,n} \gamma_{m,n}\Lambda_{m,n}
\end{equation}
\begin{equation}\label{eq:Lambda}
\Lambda_{m,n} = L_{m,n}\otimes L^*_{m,n} - \Phi_{m,n},
\end{equation}
\begin{equation}\label{eq:Phi}
\Phi_{m,n} = \displaystyle \frac{1}{2} \Big( L^\dagger_{m,n}L_{m,n}\otimes I_N +I_N\otimes(L^\dagger_{m,n}L_{m,n})^* \Big),
\end{equation}
with the superscript * representing taking the complex conjugate of all entries.

\label{Defn: Lindblad-Kossakowski vectorized}
\end{defn}

In the driver-car game presented in Section \ref{section:formulation}, note that we have $N=8$ basis states as stated in Equation \eqref{Eqn: Basis States}. We will first derive the sparse structure of $\vv{H}$ in Lemma \ref{Lemma: Lemma 1}. 

Note that the symbol $\oplus$ means direct-sum while the symbol $\otimes$ means tensor-product. The following two simple examples show their difference.
\begin{equation*}
\begin{bmatrix}
a & b \\ c & d
\end{bmatrix}
\oplus
\begin{bmatrix}
e & f \\ g & h
\end{bmatrix}
= 
\begin{bmatrix}
a & b & 0 & 0 \\ c & d & 0 & 0 \\
0 & 0 & e & f \\ 0 & 0 & g & h
\end{bmatrix}
\end{equation*}
\begin{equation*}
\begin{bmatrix}
a & b \\ c & d
\end{bmatrix}
\otimes
\begin{bmatrix}
e & f \\ g & h
\end{bmatrix}
= 
\begin{bmatrix}
ae & af & be & bf \\ ag & ah & bg & bh \\
ce & cf & de & df \\ cg & ch & dg & dh
\end{bmatrix}
\end{equation*}

\begin{lma}
If the Hamiltonian $H$ of the 8-dimensinoal Lindblad-Kossakowski equation is defined as
\begin{equation*}
    H = \begin{bmatrix} 
1 & 1 \\1 & 1\end{bmatrix}\oplus\begin{bmatrix} 
1 & 1 \\1 & 1\end{bmatrix}\oplus\begin{bmatrix} 
1 & 1 \\1 & 1\end{bmatrix}\oplus\begin{bmatrix} 
1 & 1 \\1 & 1\end{bmatrix},
\end{equation*}
then its vectorized form $\vv{H}$ is given by
\begin{equation}
\vv{H}=J\oplus J\oplus J\oplus J,
\end{equation}
where 
\begin{equation*}
J =
\begin{bmatrix} 
X\oplus X\oplus X\oplus X & I_8 
\\
I_8 & X\oplus X\oplus X\oplus X
\end{bmatrix}
\end{equation*}
with 
$X=
\begin{bmatrix} 
0 & -1 
\\
-1 & 0
\end{bmatrix}$. 
\label{Lemma: Lemma 1}
\end{lma}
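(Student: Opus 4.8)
The plan is to avoid manipulating the $64\times 64$ matrix $\vv{H}$ entrywise and instead push the whole computation through the Kronecker algebra, exploiting that the Hamiltonian in Equation~\eqref{eq:H} is block-diagonal with four identical $2\times 2$ blocks. Writing $h = \begin{bmatrix} 1 & 1 \\ 1 & 1\end{bmatrix}$, this block-diagonal form is precisely $H = I_4\otimes h$. Since $h$ is symmetric we have $H^T = H$, so the definition in Equation~\eqref{eq:vecH} collapses (with $N=8$) to $\vv{H} = H\otimes I_8 - I_8\otimes H$, which is the object I must identify with $J\oplus J\oplus J\oplus J$.

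First I would rewrite both terms as Kronecker products having $I_4$ as their outermost factor, so that the common factor can be pulled out. By associativity the first term is immediate, $H\otimes I_8 = (I_4\otimes h)\otimes I_8 = I_4\otimes(h\otimes I_8)$. The second term needs one reordering of identity factors: $I_8\otimes H = I_8\otimes(I_4\otimes h) = (I_8\otimes I_4)\otimes h = I_{32}\otimes h = (I_4\otimes I_8)\otimes h = I_4\otimes(I_8\otimes h)$. Subtracting and factoring out $I_4$ then gives $\vv{H} = I_4\otimes\big(h\otimes I_8 - I_8\otimes h\big)$, and since $I_4\otimes J$ is by definition the fourfold direct sum $J\oplus J\oplus J\oplus J$, it only remains to check that $J \triangleq h\otimes I_8 - I_8\otimes h$ is the matrix in the statement. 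The one step here that genuinely needs care is the reordering $I_8\otimes I_4 = I_4\otimes I_8$: the Kronecker product is \emph{not} commutative in general, and this equality is legitimate only because both sides equal $I_{32}$, a coincidence special to identity factors; this is the main place an error could slip in.

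Finally I would verify the form of $J$ by writing each summand as a $2\times 2$ array of $8\times 8$ blocks. Directly, $h\otimes I_8 = \begin{bmatrix} I_8 & I_8 \\ I_8 & I_8\end{bmatrix}$, while $I_8\otimes h = I_2\otimes(I_4\otimes h) = \begin{bmatrix} I_4\otimes h & 0 \\ 0 & I_4\otimes h\end{bmatrix}$ is block-diagonal. Their difference therefore carries $I_8$ in each off-diagonal block and $I_8 - I_4\otimes h$ on the diagonal, and using $I_8 = I_4\otimes I_2$ together with $I_2 - h = \begin{bmatrix} 0 & -1 \\ -1 & 0\end{bmatrix} = X$ gives $I_8 - I_4\otimes h = I_4\otimes(I_2 - h) = I_4\otimes X = X\oplus X\oplus X\oplus X$. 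This reproduces exactly the claimed $J$ and completes the argument.
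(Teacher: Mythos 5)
Your proof is correct and follows essentially the same route as the paper: both compute $H\otimes I_8$ and $I_8\otimes H^T$ explicitly and subtract blockwise, identifying the off-diagonal $I_8$ blocks and the diagonal blocks $I_8 - I_4\otimes h = X\oplus X\oplus X\oplus X$. Your reformulation via $H = I_4\otimes h$ and the factorization $\vv{H} = I_4\otimes\big(h\otimes I_8 - I_8\otimes h\big)$ is a slightly cleaner way to organize the same computation, since it makes explicit why the result is a fourfold direct sum rather than reading that pattern off the $64\times 64$ block structure.
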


\begin{proof}
By Equation \eqref{eq:vecH}, we only need to calculate $H\otimes I_N$ and $I_N\otimes H^T$. Noting $N=8$, we have
\begin{equation*}
    H\otimes I_N=\begin{bmatrix} 
I_8 & I_8 \\I_8 & I_8\end{bmatrix}\oplus\begin{bmatrix} 
I_8 & I_8 \\I_8 & I_8\end{bmatrix}\oplus\begin{bmatrix} 
I_8 & I_8 \\I_8 & I_8\end{bmatrix}\oplus\begin{bmatrix} 
I_8 & I_8 \\I_8 & I_8\end{bmatrix}
\end{equation*}
and
\begin{equation*}
    I_N\otimes H^T=I_N\otimes H=K\oplus K\oplus K\oplus K
\end{equation*}
where 
\begin{equation*}
    K=\begin{bmatrix} 
\mathbf{1}\oplus\mathbf{1}\oplus\mathbf{1}\oplus\mathbf{1}& 0 \\0 & \mathbf{1}\oplus\mathbf{1}\oplus\mathbf{1}\oplus\mathbf{1}\end{bmatrix}
\end{equation*}
with \textbf{1} the $2\times 2$ matrix whose elements are all 1.

Subtracting $I_N\otimes H^T$ from $H\otimes I_N$ blockwise then leads to the claimed $J$.
\end{proof}

\begin{remark}
\normalfont
The condition of Lemma 1 is just setting the Hamiltonian of the Lindblad-Kossakowski equation as in Equation \eqref{eq:H}. $\vv{H}$ is a sparse block diagonal matrix with four blocks, each being $J$. $J$ is a sparse matrix consists of four blocks where the off-diagonal blocks are identity matrices and the diagonal matrices are again block diagonal matrices. Such a special structure results from stacking the columns of the all-one matrices.
\end{remark}

Theorem 1 presents the special sparse structure of $\vv{L}$. To prove Theorem 1, Lemma \ref{Lemma: Lemma 2} is needed. Lemma \ref{Lemma: Lemma 2} gives the sparse structure of $\Lambda_{m,n}$.

\begin{lma}\label{Lemma: Lemma 2} 
The $(M,N)^{th}$ entry of the matrix
$\Lambda_{m,n}$ with $m \neq n$ is given by
\begin{equation}
\Lambda_{m,n}(M,N) = 
\begin{cases}
-\displaystyle \frac{1}{2}, & \text{ if } M = N = 8(n-1)+k
\\
 & \text{ or } M = N = 8(k-1)+n, 
 \\
 & k \in \{1, 2, \cdots, 8 \} \setminus \{n\}
\\[2ex]
-1 & \text{ if } M = N = 9n-8
\\[2ex]
1 & \text{ if } M = 9m-8, N = 9n-8
\\[2ex]
0 & \text{ otherwise}
\end{cases}
\end{equation}

The $(M,N)^{th}$ entry of the matrix $\Lambda_{m,n}$ with $m=n$ is given by
\begin{equation}
\Lambda_{m,n}(M,N) = 
\begin{cases}
-\displaystyle \frac{1}{2}, & \text{ if } M = N = 8(n-1)+k
\\
 & \text{ or } M = N = 8(k-1)+n, 
 \\
 & k \in \{1, 2, \cdots, 8 \} \setminus \{n\}
\\[2ex]
0 & \text{ otherwise}
\end{cases}
\end{equation}
\end{lma}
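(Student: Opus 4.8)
The plan is to compute the three ingredients of $\Lambda_{m,n} = L_{m,n}\otimes L^*_{m,n} - \Phi_{m,n}$ separately and then subtract, tracking everything through the Kronecker-product index convention in which the $(M,N)$ entry with $M = 8(i-1)+k$ and $N = 8(j-1)+l$ of a tensor product $A\otimes B$ equals $A_{ij}B_{kl}$. First I would record the elementary operator facts. Since $L_{m,n} = \ket{m}\bra{n}$ is the real matrix with a single $1$ in position $(m,n)$, we have $L^*_{m,n} = L_{m,n}$, and $L^\dagger_{m,n}L_{m,n} = \ket{n}\bra{m}\ket{m}\bra{n} = \ket{n}\bra{n}$, the diagonal matrix with a single $1$ in position $(n,n)$.

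Next I would evaluate the gain term $L_{m,n}\otimes L^*_{m,n} = \ket{m}\bra{n}\otimes\ket{m}\bra{n}$. Its $(M,N)$ entry is $(L_{m,n})_{ij}(L_{m,n})_{kl}$, which is nonzero (and equal to $1$) precisely when $i=k=m$ and $j=l=n$, i.e. at the single position $M = 8(m-1)+m = 9m-8$, $N = 8(n-1)+n = 9n-8$. This already supplies the lone $+1$ entry claimed for the $m\neq n$ case.

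Then I would compute $\Phi_{m,n} = \tfrac{1}{2}\big(\ket{n}\bra{n}\otimes I_8 + I_8 \otimes \ket{n}\bra{n}\big)$, which is diagonal. The first summand $\ket{n}\bra{n}\otimes I_8$ carries $1$'s exactly on the diagonal positions $M = 8(n-1)+k$ for $k\in\{1,\ldots,8\}$ (the $n$-th block of eight indices), while $I_8\otimes\ket{n}\bra{n}$ carries $1$'s exactly on $M = 8(k-1)+n$ for $k\in\{1,\ldots,8\}$ (the $n$-th slot inside each block). The crucial bookkeeping step is to verify that these two index sets meet in \emph{exactly one} point, $M = 9n-8$ (take $k=n$ in either description); a short range argument shows the overlap is unique because the interval of admissible $k$ has width below $1$. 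Consequently the diagonal of $\Phi_{m,n}$ equals $1$ at $M = 9n-8$ and $\tfrac{1}{2}$ at every other index in either set, and $0$ elsewhere.

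Finally I would subtract case by case. Because $\Phi_{m,n}$ is purely diagonal while the gain term sits at the off-diagonal slot $(9m-8,9n-8)$, when $m\neq n$ the two do not interact: the off-diagonal entry reads $1-0 = 1$, the diagonal entry at $9n-8$ reads $0-1 = -1$, and each remaining half-filled diagonal position reads $0-\tfrac{1}{2} = -\tfrac{1}{2}$, matching the stated formula. When $m=n$ the gain term instead lands on the diagonal at $9n-8$, cancelling the value $1$ of $\Phi_{n,n}$ to give $0$ and leaving only the $-\tfrac{1}{2}$ entries on the other block and slot positions, again as claimed. The main obstacle I anticipate is the third paragraph's bookkeeping: correctly reading off the two diagonal patterns from the Kronecker products and confirming they overlap in precisely the single index $9n-8$, since that uniqueness is exactly what separates the $-1$ from the $-\tfrac{1}{2}$ entries and what drives the $m=n$ cancellation.
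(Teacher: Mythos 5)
Your proposal is correct and follows essentially the same route as the paper's own proof: compute the single nonzero entry of $L_{m,n}\otimes L^*_{m,n}$ at position $(9m-8,\,9n-8)$, identify the two diagonal index families $8(n-1)+k$ and $8(k-1)+n$ of $\Phi_{m,n}$ with their unique overlap at $9n-8$, and subtract, with the $m=n$ case producing the cancellation at the $(9n-8,9n-8)$ entry. The only cosmetic difference is that you justify the uniqueness of the overlap by an explicit index argument where the paper simply asserts it, which is a minor strengthening rather than a different approach.
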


\begin{proof}
$L_{m,n}=\ket{m}\bra{n}$ is a real matrix, so $L^*_{m,n}=L_{m,n}$ and $L^\dagger_{m,n}=L^T_{m,n}$.

Since only the $(m,n)$ entry of $L_{m,n}$ is 1 and the others are 0, $L_{m,n}\otimes L^*_{m,n}$ is a $64\times 64$ matrix with all entries zero except the $(8(m-1)+m,8(n-1)+n)$ entry, which is 1. Note that $m$ and $n$ range from 1 to 8.

Since the $(n, n)$ entry of $L^\dagger_{m,n}L_{m,n}$ is 1 and the other entries are 0, $L^\dagger_{m,n}L_{m,n}\otimes I_8$ is a 64$\times$64 matrix whose entries are all zero except the $[8(n-1)+1]$th to the 8$n$th diagonal entries (which are 1), and $I_8\otimes(L^\dagger_{m,n}L_{m,n})^*$ is a 64$\times$64 matrix whose entries are all zero except the $(M,M)$ entries (which are 1) with $M=8(k-1)+n$, $k=1,2,...,8$ for each fixed $n$. Thus by Equation \eqref{eq:Phi}, $\Phi_{m,n}$ is a 64$\times$64 matrix whose entries are all zero except the $(M,M)$ entries with $M=8(n-1)+k$ or $M=8(k-1)+n$, $k=1,2,...,8$ for each fixed $n$. The $(M,M)$ entries are 1/2 when $8(n-1)+k\neq 8(k-1)+n$ and is 1 when $8(n-1)+k=8(k-1)+n$.

By Equation \eqref{eq:Lambda}, subtracting $\Phi_{m,n}$ from $L_{m,n}\otimes L^*_{m,n}$ leads to the claimed result:  When $m\neq n$, there is no cancellation of nonzero entries between $\Phi_{m,n}$ and $L_{m,n}\otimes L^*_{m,n}$. When $m=n$, only the $(9n-8,9n-8)$ entry of $L_{m,n}\otimes L^*_{m,n}$ is nonzero (which is 1). The $(9n-8,9n-8)$ entry of $\Phi_{m,n}$ also 1. Thus the resultant only has 14 nonzero entries.
\end{proof} 

\begin{remark}
\normalfont
Note that $\Lambda_{m,n}$ does not mean the $(m,n)$ entry of $\Lambda$. $\Lambda_{m,n}$ is itself a matrix. There are 64 such matrices and they will be weighed by $\gamma_{m,n}$ and summed. Then $(M,N)$ entry of $\Lambda_{m,n}$ depend on $m$, $n$, $M$, and $N$. $\Lambda_{m,n}$ is very sparse. The nonzero entries can only take $\pm1$ and $-1/2$ since the building blocks $L_{m,n}$ and $I_N$ only has 1 as nonzero entry value. Given $m$ and $n$, the $(M,N)$ entries with $M=N=8(n-1)+k$ or $M=N=8(n-1)+k$ are special since either $L_{m,n}\otimes L^*_{m,n}$ or $\Phi_{m,n}$ takes nonzero values at these entries.
\end{remark}

Next we will multiply the $\Lambda_{m,n}$ obtained in Lemma \ref{Lemma: Lemma 2} with $\gamma_{m,n}$ and sum over all $m$ and $n$ to obtain $\vv{L}$ in Theorem \ref{Theorem: Theorem 1}.

\begin{thrm}\label{Theorem: Theorem 1} 
Let the coefficients $\gamma_{m,n}$ in the 8-dimensinoal Lindblad-Kossakowski equation be the $(m,n)$ entries of the matrix (ref. to Equation \eqref{eq:PI})
\begin{equation*}
C = \Pi(\lambda) = M[\mu(\lambda)]\oplus M[\nu(\lambda)]\oplus M[\xi(\lambda)]\oplus M[o(\lambda)],
\end{equation*}
where $M[a]$ is of the form
\begin{equation*}
M[a]=
\begin{bmatrix} 
a & a 
\\
1-a & 1-a
\end{bmatrix}.
\end{equation*}
Then, the $(M,N)^{th}$ entries of $\vv{L}$ within the vectorized Lindblad-Kossakowski equation (ref. to Def. \ref{Defn: Lindblad-Kossakowski vectorized}) with $M=N$ are given by
\begin{equation*}
\vv{L}_{M,N} =\begin{cases}
-\displaystyle\frac{1}{2} \ (C_{n+1,n}+C_{l+1,l}), & n\neq l, n \text{ is odd} 
\\[1ex]
-\ C_{n+1,n}, & n=l, n \text{ is odd} 
\\[1ex]
-\displaystyle\frac{1}{2} \ (C_{n-1,n}+C_{l-1,l}), & n\neq l, n \text{ is even} \\[1ex]
-\ C_{n-1,n}, & n=l, n \text{ is even}
\end{cases}
\end{equation*}
where $n = \lfloor \frac{M-1}{8}+1 \rfloor$ and $l= (M-1) \mod 8 + 1$, and the $(M,N)^{th}$ entries of $\Gamma$ with $M \neq N$ are given by
$$
\vv{L}_{M,N} =
\begin{cases}
C_{n+1,n}, & M = 9n+1, N = 9n-8, n =1,3,5,7 
\\[1ex]
C_{n-1,n}, & M = 9n-17, N = 9n-8, n =2,4,6,8
\\[1ex]
0, & \text{otherwise}
\end{cases}
$$
\end{thrm}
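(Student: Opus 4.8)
The plan is to expand $\vv{L}=\sum_{m,n}\gamma_{m,n}\Lambda_{m,n}=\sum_{m,n}C_{m,n}\Lambda_{m,n}$ entrywise and to exploit that both factors in every summand are extremely sparse. On one hand, Lemma \ref{Lemma: Lemma 2} already records the (at most $16$) nonzero positions of each $\Lambda_{m,n}$ together with their values in $\{1,-1,-\tfrac12\}$. On the other hand, the block-diagonal hypothesis on $C$ forces $C_{m,n}=0$ unless $m$ and $n$ index the same $2\times2$ block, i.e. $\lceil m/2\rceil=\lceil n/2\rceil$; within the $k$-th block (parameter $a_k\in\{\mu,\nu,\xi,o\}$) the only nonzero entries are $C_{2k-1,2k-1}=C_{2k-1,2k}=a_k$ and $C_{2k,2k-1}=C_{2k,2k}=1-a_k$. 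I would therefore fix a target position $(M,N)$ of $\vv{L}$, use Lemma \ref{Lemma: Lemma 2} to enumerate exactly those pairs $(m,n)$ for which $\Lambda_{m,n}(M,N)\neq0$, and then sum the corresponding weights $C_{m,n}$, discarding every pair that leaves a diagonal block.

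I would dispose of the off-diagonal entries ($M\neq N$) first, as they are cleanest. By Lemma \ref{Lemma: Lemma 2} the only off-diagonal nonzero of $\Lambda_{m,n}$ is the $+1$ at $(9m-8,\,9n-8)$, which occurs only when $m\neq n$. Since $(m,n)\mapsto(9m-8,9n-8)$ is injective, at most one summand can hit a given $(M,N)$, so $\vv{L}_{M,N}=C_{m,n}$ when $(M,N)=(9m-8,9n-8)$ and vanishes otherwise. Imposing the block constraint ($m\neq n$ and $\lceil m/2\rceil=\lceil n/2\rceil$) leaves precisely the two families $(m,n)=(2k,2k-1)$ and $(m,n)=(2k-1,2k)$; re-expressing $9m-8$ and $9n-8$ through $n$ reproduces the two stated cases, namely $M=9n+1,\,N=9n-8$ with weight $C_{n+1,n}$ for odd $n$, and $M=9n-17,\,N=9n-8$ with weight $C_{n-1,n}$ for even $n$.

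For the diagonal entries ($M=N$) I would introduce the index map $n=\lfloor(M-1)/8\rfloor+1$ and $l=(M-1)\bmod 8+1$, so that $M=8(n-1)+l$ corresponds to the $(l,n)$ entry of $\rho$. From Lemma \ref{Lemma: Lemma 2}, a position $(M,M)$ can receive a $-1$ (only from the $9n-8$ term, present when the first index differs) and two families of $-\tfrac12$ contributions indexed by the \emph{second} subscript of $\Lambda$. Matching $M=8(n-1)+l$ against the sets $\{8(q-1)+k\}$ and $\{8(k-1)+q\}$ of Lemma \ref{Lemma: Lemma 2} shows that the contributing second subscripts are $q=n$ and $q=l$, and that the $-1$ enters exactly when $l=n$. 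I would then split into the case $l=n$, where only the $-1$ survives and summing $C_{m,n}$ over the block with $m\neq n$ collapses to a single off-diagonal entry ($-C_{n+1,n}$ for odd $n$, $-C_{n-1,n}$ for even $n$), and the case $l\neq n$, where two families of $-\tfrac12$ contributions — one attached to column $n$, the other to column $l$ — are assembled, the parities of $n$ and $l$ selecting which block entries $C_{n\pm1,n}$ and $C_{l\pm1,l}$ enter the claimed expression.

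The main obstacle is the bookkeeping in this last diagonal step. One must determine, for each target $(M,M)$, exactly which second subscripts $q$ of $\Lambda_{m,n}$ deposit a $-\tfrac12$ there and over which first subscripts $m$ the weights $C_{m,q}$ are subsequently summed, and then use the two-nonzeros-per-column structure of each block, combined with the parity of the within-block index, to reduce those column contributions to the single entries appearing in the statement. Getting the modular index arithmetic and the odd/even case split exactly right — rather than any conceptual subtlety — is where the proof demands the most care.
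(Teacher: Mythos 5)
Your overall strategy is the same as the paper's: restrict the double sum $\vv{L}=\sum_{m,n}C_{m,n}\Lambda_{m,n}$ to the $16$ pairs $(m,n)$ lying inside the $2\times 2$ blocks of $C=\Pi(\lambda)$, read off the nonzero positions of each $\Lambda_{m,n}$ from Lemma \ref{Lemma: Lemma 2}, and collect contributions entry by entry; the paper merely organizes the sum by the type of $C$-entry (the $(n,n)$ terms versus the $(n\pm1,n)$ terms) rather than by target position $(M,N)$. Your handling of the off-diagonal entries and of the diagonal entries with $l=n$ is correct and matches the appendix.

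The one substantive gap is the diagonal case $l\neq n$. A faithful application of Lemma \ref{Lemma: Lemma 2} shows that the position $(M,M)$ with $M=8(n-1)+l$ receives a $-\tfrac{1}{2}$ from \emph{every} $\Lambda_{m,q}$ with $q\in\{n,l\}$, including the diagonal ones $\Lambda_{n,n}$ and $\Lambda_{l,l}$, whose weights $C_{n,n}$ and $C_{l,l}$ are nonzero (they equal $\mu,\nu,\xi,o$ or their complements). Summing each relevant column of $C$ therefore yields $-\tfrac{1}{2}\,(C_{n,n}+C_{n\pm1,n}+C_{l,l}+C_{l\pm1,l})$, not the $-\tfrac{1}{2}\,(C_{n\pm1,n}+C_{l\pm1,l})$ that you say ``enter the claimed expression.'' You cannot discard the $C_{n,n}$ and $C_{l,l}$ contributions: they are precisely the $-\tfrac{\alpha}{2}C_{2j-1,2j-1}$-type terms that reappear in the $B_{ij}$ blocks of Corollary \ref{Corollary: Corollary 1}. (To be fair, the displayed theorem omits them too, and the paper's own appendix computes the $(n,n)$ contribution and then silently drops it in the last line; but as a proof of the stated formula your argument, like the paper's, does not close, and executed correctly it produces the extra $-\tfrac{1}{2}(C_{n,n}+C_{l,l})$, which in fact makes every such diagonal entry equal to $-1$ because each column of $M[a]$ sums to one.)
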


\begin{proof}
Interested readers may refer to Appendix \ref{sec: Proof of theorem 1}.
\end{proof}

\begin{remark}
\normalfont $\vv{L}_{M,N}$ depends on $M$ and $N$. The expression of $\vv{L}_{M,N}$ must consist of entries of $C$. Theorem 1 just reveals explicitly these relations. The entries of $C$ appearing in the expression of $\vv{L}_{M,N}$ are $C_{n,n}$ and $C_{n\pm1,n}$ where $n = \lfloor \frac{M-1}{8}+1 \rfloor$ or $n= (M-1) \mod 8 + 1$. Such relations arise due to vectorization (stacking columns). Dividing by 8 and mode 8 appear since each column to be stacked is 8-dimensional. Despite summation over all $m$ and $n$, at most two entries of $C$ appear in $\vv{L}_{M,N}$ since $C=\Pi(\lambda)$ is itself sparse.
\end{remark}

Next we will combine the $\vv{H}$ obtained in Lemma \ref{Lemma: Lemma 1} and the $\vv{L}_{m,n}$ obtained in Theorem \ref{Theorem: Theorem 1} to obtain $-i(1-\alpha)\vv{H}+\alpha\vv{L}$ in Corollary \ref{Corollary: Corollary 1}. 

\begin{cor}\label{Corollary: Corollary 1} 
If the coefficients $\gamma_{m,n}$ of the 8 dimensional Lindblad-Kossakowski equation is set as the $(m,n)$ entries of 
\begin{equation*}
    C=\Pi(\lambda)=M(\mu(\lambda))\oplus M(\nu(\lambda))\oplus M(\xi(\lambda))\oplus M(o(\lambda)),
\end{equation*}
where $M(a)$ is a matrix in the form of
\begin{equation*}
    M(a)=\begin{bmatrix} a & a \\1-a & 1-a\end{bmatrix},
\end{equation*}
then 
\begin{equation*}
    -i(1-\alpha)\vv{H}+\alpha\vv{L} = A_1\oplus A_2\oplus A_3\oplus A_4
\end{equation*}
where 
\begin{equation*}
    A_i=\begin{bmatrix} 
B_{i1}\oplus B_{i2}\oplus B_{i3}\oplus B_{i4} & -i(1-\alpha)I_8+\alpha E_i \\-i(1-\alpha)I_8+\alpha D_i & B_{i5}\oplus B_{i6}\oplus B_{i7}\oplus B_{i8}\end{bmatrix}.
\end{equation*}
$D_i$ and $E_i$ are 16$\times$16 matrices. They both have only one nonzero entry. The nonzero entries are taken from the cognition matrix $C$:
\begin{equation*}
\begin{array}{ccc}
D_1(2,1)=C_{2,1}, & 
E_1(1,2)=C_{1,2}, &
D_2(4,3)=C_{4,3},
\\[2ex]
E_2(3,4)=C_{3,4}, & 
D_3(6,5)=C_{6,5}, &
E_3(5,6)=C_{5,6},
\\[2ex]
D_4(8,7)=C_{8,7}, & 
E_4(7,8)=C_{7,8}. & 
\end{array}
\end{equation*}

The $B_{ij}$'s are 4$\times$4 matrices:

\begin{equation*}
B_{ii} = \displaystyle F_i-\frac{\alpha}{2}\begin{bmatrix} 
C_{2i,2i-1}-C_{2i-1,2i-1}  & 0 \\0 & C_{2i-1,2i}+C_{2i,2i}\end{bmatrix}, 
\end{equation*}
\begin{equation*}
B_{i(i+4)}=G_i-\frac{\alpha}{2}\begin{bmatrix} 
C_{2i-1,2i-1}+C_{2i,2i-1}  & 0 \\0 & C_{2i-1,2i}-C_{2i,2i}\end{bmatrix}, 
\end{equation*}
\begin{equation*}
B_{ij}=F_i-\frac{\alpha}{2}\begin{bmatrix} 
C_{2j-1,2j-1}+C_{2j,2j-1} & 0 \\0 & C_{2j-1,2j}+C_{2j,2j}\end{bmatrix},
\end{equation*}
\begin{equation*}
B_{i(j+4)}=G_i-\frac{\alpha}{2}\begin{bmatrix} 
C_{2j-1,2j-1}+C_{2j,2j-1} & 0 \\0 & C_{2j+1,2j}+C_{2j,2j}\end{bmatrix}
\end{equation*}
for $i\neq j$, $i=1,2,3,4$, $j=1,2,3,4$, where
\begin{equation*}
F_i=i(1-\alpha)\begin{bmatrix} 0 & 1 \\1 & 0\end{bmatrix}-\frac{\alpha}{2}(C_{2i-1,2i-1}+C_{2i,2i-1})I_2,
\end{equation*}
\begin{equation*}
G_i=i(1-\alpha)\begin{bmatrix} 0 & 1 \\1 & 0\end{bmatrix}-\frac{\alpha}{2}(C_{2i-1,2i}+C_{2i,2i})I_2.
\end{equation*}
\end{cor}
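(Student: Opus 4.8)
The plan is to substitute the structured forms of $\vv{H}$ and $\vv{L}$ into $-i(1-\alpha)\vv{H}+\alpha\vv{L}$ and then reorganize the resulting $64\times 64$ matrix into nested blocks. By Lemma \ref{Lemma: Lemma 1}, $\vv{H}=J\oplus J\oplus J\oplus J$ is block diagonal with four $16\times16$ blocks, so $-i(1-\alpha)\vv{H}$ already has the outer direct-sum form. The first task is therefore to confirm that $\vv{L}$ is compatible with this partition, i.e.\ that $\alpha\vv{L}$ is also block diagonal with four $16\times16$ blocks. I would read off the nonzero entries of $\vv{L}$ from Theorem \ref{Theorem: Theorem 1} and check that every off-diagonal entry (those with $M\neq N$, located at $M=9n+1,N=9n-8$ for odd $n$ and at $M=9n-17,N=9n-8$ for even $n$) has both indices inside the same block $\{16(i-1)+1,\dots,16i\}$; the diagonal entries lie inside a block automatically. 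Once this compatibility holds, $-i(1-\alpha)\vv{H}+\alpha\vv{L}=A_1\oplus A_2\oplus A_3\oplus A_4$, and it remains only to identify each $A_i$.

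Next I would split each $16\times16$ block $A_i$ into a $2\times2$ array of $8\times8$ sub-blocks, matching the two-row/two-column structure of $J$. The two off-diagonal $8\times8$ sub-blocks inherit the identity blocks of $J$, contributing $-i(1-\alpha)I_8$, together with the lone off-diagonal entry of $\vv{L}$ that falls inside block $i$; this single entry is exactly $C_{2i,2i-1}$ (bottom-left) and $C_{2i-1,2i}$ (top-right), producing the matrices $\alpha D_i$ and $\alpha E_i$ whose unique nonzero positions I would pin down from the index formulas of Theorem \ref{Theorem: Theorem 1}. The two diagonal $8\times8$ sub-blocks inherit $-i(1-\alpha)(X\oplus X\oplus X\oplus X)$ from $J$ together with the diagonal entries of $\vv{L}$; since $-i(1-\alpha)X=i(1-\alpha)\left[\begin{smallmatrix}0&1\\1&0\end{smallmatrix}\right]$ and the $\vv{L}$ diagonal is genuinely diagonal, grouping into consecutive $2\times2$ cells yields the blocks $B_{ij}$, in which the shared off-diagonal piece $i(1-\alpha)\left[\begin{smallmatrix}0&1\\1&0\end{smallmatrix}\right]$ and the common part of the $\vv{L}$ diagonal are absorbed into $F_i$ and $G_i$, while the cell-dependent remainder is the $-\tfrac{\alpha}{2}(\cdots)$ correction.

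The final and most delicate step is the entrywise verification of the $B_{ij}$, $F_i$, $G_i$, $D_i$, $E_i$ formulas, which is pure index bookkeeping: one must translate each vectorized index $M$ into the triple (outer block $i$, which $8\times8$ sub-block, which $2\times2$ cell and position) using $n=\lfloor (M-1)/8\rfloor+1$ and $l=(M-1)\bmod 8+1$, then track the parity cases of Theorem \ref{Theorem: Theorem 1} and the resulting $\pm$ signs and $\tfrac12$ factors. I expect this to be the main obstacle, because several distinct $\Lambda_{m,n}$ contribute to each diagonal cell and the apparent dependence on many $C_{\cdot,\cdot}$ entries must be collapsed. The key simplification I would exploit is that $C=\Pi(\lambda)$ is block diagonal with columns summing to one (each $2\times2$ block $M[a]$ has column sum $a+(1-a)=1$); using $C_{2i-1,2i-1}+C_{2i,2i-1}=1$ and the vanishing of all cross-block entries of $C$ repeatedly reduces each accumulated diagonal entry to the compact form asserted for $B_{ij}$. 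Assembling the verified sub-blocks gives each $A_i$, and hence the claimed direct sum.
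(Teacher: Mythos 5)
Your plan is exactly the paper's (implicit) derivation: the paper gives no written proof of Corollary 1 beyond the remark that it is obtained by combining Lemma 1 with Theorem 1, and your block-compatibility check on the off-diagonal entries of $\vv{L}$, the splitting of each $A_i$ into $8\times 8$ sub-blocks matching $J$, and the $2\times 2$-cell bookkeeping that produces $B_{ij}$, $F_i$, $G_i$, $D_i$, $E_i$ are precisely the intended steps, and they do go through. One caution: for the diagonal cells you should accumulate the contributions directly from Lemma 2 (as you already hint when you note that several $\Lambda_{m,n}$ feed each cell) rather than from the displayed diagonal formula in Theorem 1, because that formula as printed omits the $-\tfrac{1}{2}(C_{n,n}+C_{l,l})$ contributions coming from the $\Lambda_{n,n}$ terms — contributions that are visibly present in the corollary's $F_i$, $G_i$ and $B_{ij}$ (e.g.\ the $C_{2i-1,2i-1}$ and $C_{2j,2j}$ entries), so taking Theorem 1 at face value there would leave you unable to match the stated result.
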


\begin{remark}
\normalfont
The Lindblad-Kossakowsi equation itself is not a cognition model since its coefficients $\gamma_{m,n}$ are quite general. The open quantum cognition model is built by setting the $\gamma_{m,n}$ as $(m,n)$ entry of the cognition matrix $C$. The condition in Corollary 1 is just setting $\phi=0$ in Equation \eqref{eq:cognition matrix} and using the $\Pi(\lambda)$ prescribed in Equation \eqref{eq:PI}. This is exactly the scenario of the car-driver game. 
\end{remark}

\begin{remark}
\normalfont
The vectorized operator $-i(1-\alpha)\vv{H}+\alpha\vv{L}$ of the vectorized Lindblad-Kossakowski equation is a block diagonal matrix with four blocks. The four blocks have very similar structures. Each block is actually quite sparse since each block is a block matrix with totally four sub-blocks and the two off-diagonal sub-blocks are almost identity matrix (only one entry is different). 
\end{remark}

\section{Pure-strategy equilibrium}
\label{section:pure}

The diagonal elements of the steady-state solution $\rho$ of Equation \eqref{eq:lindblad} are just $Pr(SNC), Pr(SNS), \cdots, Pr(DAS)$. Then we can calculate the probability for the driver to continue as 
\begin{equation}\label{eq:PrCsum}
Pr(C) = Pr(SNC) + Pr(SAC) + Pr(DNC) + Pr(DAC).
\end{equation}

Let $p$ be the probability that the driver judges the road to be safe before knowing the car's action and $U_2$ be the utility function of the driver. In this paper, we model driver's pure strategy $s_2$ as the output of the open quantum cognition model parameters $\alpha$ and $\lambda$ taking the pure strategy of the car $s_1$ as input:  
\begin{equation}
s_2 = OQ_{pure}(s_1;\Theta) = \begin{cases}
C, & \text{ if } Pr(C) \geq 0.5,
\\[2ex]
S, & \text{ if } Pr(C) < 0.5.
\end{cases}
\end{equation}
where $\Theta=(\alpha,\lambda,p,U_2)$ is the parameter tuple of the open quantum model. 

\begin{remark}
\normalfont
In this paper, we use $Pr(C)$ in two different ways to obtain pure and mixed strategy equilibria. We obtain a pure strategy at the driver by employing a hard threshold on $Pr(C)$ (in our case, Continue if $Pr(C) \geq 0.5$, Stop otherwise). By treating $Pr(C)$ as the driver's mixed strategy in Section \ref{section:mix}, we will obtain the mixed-strategy equilibrium. 
% If $Pr(C) > 0.5$, $OQ_{pure}(s_1;\alpha,\lambda)$ outputs action $C$. If $Pr(C) < 0.5$, $OQ_{pure}(s_1;\alpha,\lambda)$ outputs action $S$. If $Pr(C) = 0.5$, $OQ_{pure}(s_1;\alpha,\lambda)$ outputs action $C$ or $S$ with probability 0.5.
\end{remark}

We set the initial density matrix as $\rho_0=\ket{\Psi_0}\bra{\Psi_0}$, where 
\begin{equation*}
\ket{\Psi_0}=\sqrt{p/2}(\ket{e_3}+\ket{e_4})+\sqrt{(1-p)/2}(\ket{e_7}+\ket{e_8})
\end{equation*}
when the car action is A and  
\begin{equation*}
\ket{\Psi_0}=\sqrt{p/2}(\ket{e_1}+\ket{e_2})+\sqrt{(1-p)/2}(\ket{e_5}+\ket{e_6})
\end{equation*}
when the car action is N, with $\ket{e_i}$ prescribed in Subsection \ref{subsec: oq}. The calculation of the generalized pure-strategy equilibrium is similar to that of the Nash equilibrium. We simply replace the best response with the final response. We loop over the car strategies. In the loop, the car strategy is the input of the open quantum model and a driver strategy is the output. If the car strategy is the best response with the outputted driver strategy, then the strategy profile is outputted as pure-strategy equilibrium. Algorithm \ref{algo_pure} lists the procedures of calculating the pure-strategy equilibrium.

\begin{figure}[!t]
\centering
\begin{subfigure}{0.24\textwidth}
\centering
\includegraphics[width=\textwidth]{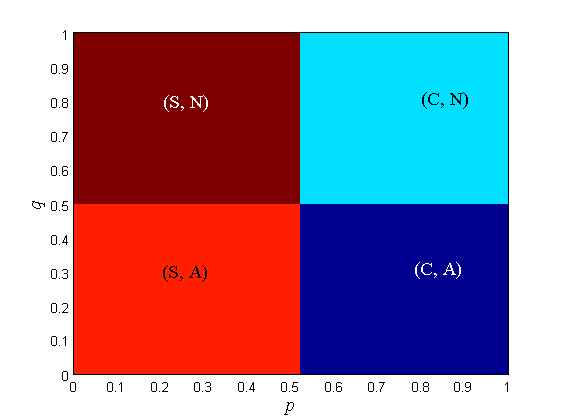}
\caption{Driver-Agnostic car}
\end{subfigure}%
\begin{subfigure}{0.24\textwidth}
\centering
\includegraphics[width=\textwidth]{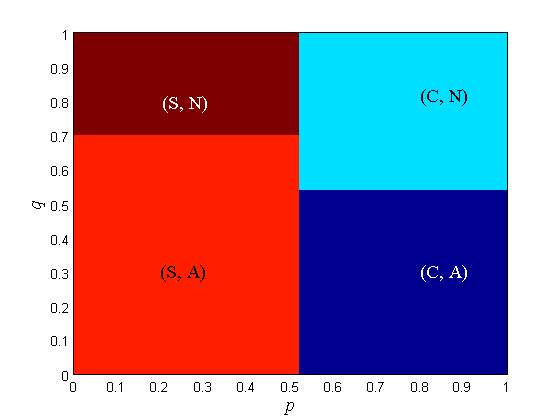}
\caption{Driver-Conscient car}
\end{subfigure}%
\caption{Equilibrium points of the driver-car games with a driver-agnostic car and with a driver-conscient car (i.e., assumes that the driver uses open quantum model with $\lambda= 10, \alpha=0.2$ to make decisions) under various prior beliefs.}
\label{fig:pureAlpha}
\end{figure}

\begin{table}[!t]
\centering
\begin{tabular}{c}
\begin{tabular}{r|c|c|}
\multicolumn{1}{r}{} &  \multicolumn{1}{c}{C} & \multicolumn{1}{c}{S} 
\\
\cline{2-3}
N & $a_{1,s} = 85$, $a_{2,s} = 85$ & $b_{1,s} = 75$, $b_{2,s} = 50$ 
\\
\cline{2-3}
A & $c_{1,s} = 40$, $c_{2,s} = 85$ & $d_{1,s} = 50$, $d_{2,s} = 50$ 
\\
\cline{2-3}
\end{tabular}
\\[5ex]
\begin{tabular}{ r|c|c| }
\multicolumn{1}{r}{}
 &  \multicolumn{1}{c}{C}
 & \multicolumn{1}{c}{S} \\
\cline{2-3}
N & $a_{1,d} = 25$, $a_{2,d} = 25$ & $b_{1,d} = 30$, $b_{2,d} = 60$ \\
\cline{2-3}
A & $c_{1,d} = 75$, $c_{2,d} = 25$ & $d_{1,d} = 85$, $d_{2,d} = 85$ \\
\cline{2-3}
\end{tabular}
\end{tabular}
\caption{Utilities used in our numerical results when the road is safe (above) and dangerous (below)}
\label{tab:tab1-sim}
\end{table}

Furthermore, in our numerical evaluation, we assume the utilities at both the car and the driver as shown in Table \ref{tab:tab1-sim}. In addition to the case of a driver-conscient car, we consider a benchmark case where the car does not care about the driver and makes decisions solely based on its prior, i.e., alert if $q<0.5$ and does not alert if $q\geq0.5$. In this benchmark case, the final response of the car is independent of the driver's strategy. The equilibrium points of the driver-car games with a driver-agnostic car and with a driver-conscient car (driver making decisions according to open quantum model with $\lambda= 10, \alpha=0.2$) under various prior beliefs on road condition are shown in Fig. \ref{fig:pureAlpha}. When both the driver and car are sure of safety, the equilibrium is $(N, C)$. When both the driver and car are sure of danger, the equilibrium is $(A, S)$. When the driver is sure of safety but the car is sure of danger, the equilibrium is $(A, C)$. When the driver is sure of danger but the car is sure of safety, the equilibrium is $(N, S)$. The division line is not $p = q = 0.5$. (S, A) has the largest area. When the car is driver-agnostic, the border between Not Alert and Alert in the equilibrium plot is always $q=0.5$ regardless of the equilibrium strategy of the driver. When the car is driver-conscient, the border between Not Alert and Alert depends on the equilibrium strategy of the driver (or equivalently, road prior of the driver): the border is located close to $q=0.7$ when $p\leq0.50$ and the border is located close to $q=0.52$ when $p\geq0.52$.

\IncMargin{1em}
\begin{algorithm}[!t]
\SetKwData{Left}{left}
\SetKwData{This}{this}
\SetKwData{Up}{up}
\SetKwFunction{Union}{Union}\SetKwFunction{FindCompress}{FindCompress}
\SetKwInOut{Input}{input}\SetKwInOut{Output}{output}

\Input{Parameters: $\alpha,\lambda$; prior about road safety of the driver: $p$; prior about road safety of the car: $q$; Utility function of the car: $U_1(s_1,s_2,r)$ where $r=S$ means safe and $r=D$ means dangerous; Utilities of the driver: $U_2$.}
\Output{Pure-strategy equilibrium: $S^*$}
\BlankLine
\tcp{Empty set means no equilibrium}
$S^*=\emptyset$\;
\For{$s_1\;\mathbf{in}\;\{N,A\}$}{
 $\Bar{s}_1 = \mathrm{element\;of\;}\{N,A\}\setminus\{s_1\}$\;
 $s_2 = OQ_{pure}(s_1;\alpha,\lambda)$\;
 $u = qU_1(s_1,s_2,S)+(1-q)U_1(s_1,s_2,D)$\;
 $\Bar{u} = qU_1(\Bar{s}_1,s_2,S)+(1-q)U_1(\Bar{s}_1,s_2,D)$\;
 \If{$u\geq\Bar{u}$}{
    $S^*=S^*\cup\{s_1,s_2\}$\;
  }
 }

\caption{Calculating the pure-strategy equilibrium of the car-driver game}
\label{algo_pure}
\end{algorithm}
\DecMargin{1em}

The equilibrium points of the driver-car game with $\lambda$ = 0, 1, 2, 3, 4, 10 and $\alpha$ = 0.8 under various prior beliefs on road condition are shown in Fig. \ref{fig:pureLambda} (C: Continues, S: Stop, A: Alert, N: Not Alert). When $\lambda$ drops from 10 to 4, the border between S and C shifts from left to right. When $\lambda$ drops from 4 to 3, the border between (S, A) and (C, A) shifts from left to right and a region with two equilibrium points appears inside the region of (C, N). The two equilibria are (C, N) and (S, A). When $\lambda$ drops from 4 to 3 and from 3 to 2, the border between (S, A) and (C, A) shifts from left to right and the region with two equilibrium points enlarges with the border shift. When $\lambda$ drops from 2 to 0, the driver can no longer distinguish the utilities. The $(C, N)$ region is merged into the $(S, N)$ region and the two-equilibrium region is merged into the $(S, A)$ region. The border between $(S, A)$ and $(C, A)$ shifts from left to right and a new no-equilibrium region appears inside the previous $(C, N)$ region.

\begin{figure*}[htp]
    \centering
    \begin{subfigure}{0.25\textwidth}
      \centering
      \includegraphics[width=\textwidth]{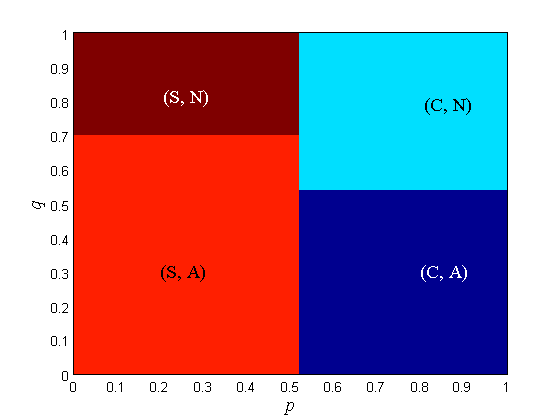}
      \caption{$\lambda=10$}
    \end{subfigure}%
    \begin{subfigure}{0.25\textwidth}
      \centering
      \includegraphics[width=\textwidth]{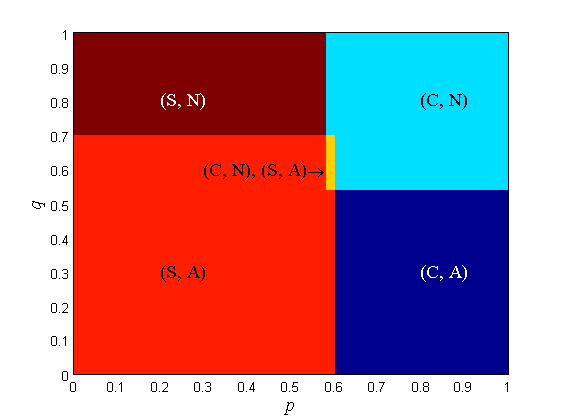}
      \caption{$\lambda=3$}
    \end{subfigure}%
    \begin{subfigure}{0.25\textwidth}
      \centering
      \includegraphics[width=\textwidth]{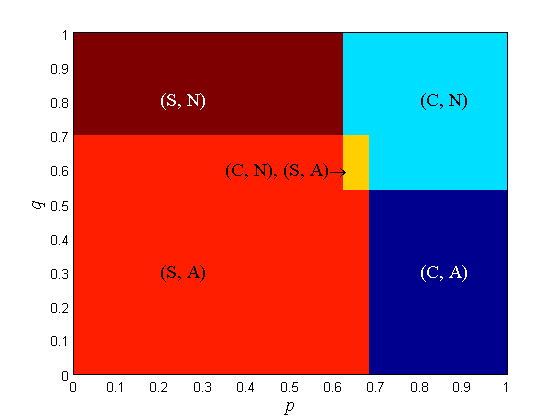}
      \caption{$\lambda=1$}
    \end{subfigure}%
  \caption {Equilibrium points of the driver-car game with $\alpha=0.8$ under various prior beliefs on road condition}
  \label{fig:pureLambda}
\end{figure*}

\begin{remark}
\normalfont
When $\lambda=0$, the driver cannot distinguish the utilities at all and is completely random, so the concept of final response does not apply. The type of pure-strategy equilibrium strongly aligns with the priors of the driver and the car. The desired equilibria are $(C, N)$ and $(S, A)$, where the driver's action is in harmony with car's action. 
\end{remark}

\begin{remark}
\normalfont
Since Fig. \ref{fig:pureAlpha} and Fig. \ref{fig:pureLambda} are plotted over $(p,q)$ axes, we can find out which type of equilibrium is most common. With the prescribed utilities, the most common pure-strategy equilibrium is $(S, A)$. This is the most favorable equilibrium, since following the car's recommendation in the dangerous road can save life. 
\end{remark}

\begin{remark}
\normalfont
As the driver's ability to distinguish utilities weakens ($\lambda$ decreases), $(S, A)$ becomes more likely. This means that the driver follows the car's advice diligently especially when he/she is incapable of making decisions on a dangerous road. 
\end{remark}

\section{Mixed-strategy equilibrium}
\label{section:mix}

When calculating the mixed-strategy equilibrium, $p$ and $p_A$ appear in the initial state of the open quantum model since the mixed-strategy of the car is completely determined by $p_A$ (ref. to Subsecion \ref{section:eq}). Theorem 2 will give a closed-form expression of $Pr(C)$ by solving the vectorized Lindblad-Kossakowski equation (ref. to Definition \ref{Defn: Lindblad-Kossakowski vectorized}). 

\begin{figure*}
\begin{thrm}
Let the initial density matrix be given as $\rho_0=\ket{\Psi_0}\bra{\Psi_0},$ where 
\begin{equation*}
    % \begin{array}{l}
\ket{\Psi_0}=\sqrt{p(1-p_A)/2}(\ket{e_1}+\ket{e_2})+\sqrt{pp_A/2}(\ket{e_3}+\ket{e_4})
% \\[1ex]
% \displaystyle \qquad \qquad \qquad \qquad \qquad 
+\sqrt{(1-p)(1-p_A)/2}(\ket{e_5}+\ket{e_6})
% \\[1ex]
% \displaystyle \qquad \qquad \qquad \qquad \qquad \qquad 
+\sqrt{(1-p)p_A/2}(\ket{e_7}+\ket{e_8}).
% \end{array}
\end{equation*}
The probability that the driver chooses to continue is
\begin{equation*}
    % \begin{array}{l}
\displaystyle Pr(C) = \frac{2(1-\alpha)^2}{c} + \frac{\alpha^2}{c}r + rh(t)
% \\[2ex]
% \displaystyle \qquad \qquad  
+\Big(\frac{1}{2}-\frac{2(1-\alpha)^2}{c}\Big)e^{-\alpha t} \cos \left[ 2(1-\alpha)t \right]
% \\[2ex]
% \displaystyle \qquad \qquad    
-\frac{\alpha(1-\alpha)}{c}e^{-\alpha t} \sin \left[ 2(1-\alpha)t \right],
% \end{array}
\end{equation*}
where $c=\alpha^2+4(1-\alpha)^2$,
% \begin{equation*}
% \begin{split}
$r = 
% & \; 
p (1-p_A) C_{1,2}+ p p_A C_{3,4} 
% \\
%  & 
+ (1-p)(1-p_A)C_{5,6} + (1-p)p_A C_{7,8}$,
% \end{split}
% \end{equation*}
and
\begin{equation*}
h(t) = \displaystyle \frac{\alpha}{c}e^{-\alpha t} \Big\{ 2(1-\alpha)\sin \left[ 2(1-\alpha)t \right]
- \alpha \cos \left[2(1-\alpha)t \right] \Big\}.
\end{equation*}
\vspace{-2ex}
\label{Thrm: Pr(C)}
\end{thrm}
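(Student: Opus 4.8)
The plan is to solve the linear system $d\vec{\rho}/dt = [-i(1-\alpha)\vec{H} + \alpha\vec{L}]\vec{\rho}$ of Definition \ref{Defn: Lindblad-Kossakowski vectorized} in closed form and then extract the four diagonal entries of $\rho$ feeding $Pr(C)$ through Equation \eqref{eq:PrCsum}. The decisive structural input is Corollary \ref{Corollary: Corollary 1}, which factors the generator as $A_1 \oplus A_2 \oplus A_3 \oplus A_4$. Since a direct sum is block-diagonal, its matrix exponential factors blockwise, so the $64$-dimensional dynamics split into four independent $16$-dimensional systems, one per pair of columns of $\rho$. Under column-stacking, $Pr(SNC)=\rho_{1,1}$, $Pr(SAC)=\rho_{3,3}$, $Pr(DNC)=\rho_{5,5}$, $Pr(DAC)=\rho_{7,7}$ land in $A_1,A_2,A_3,A_4$ respectively, so each summand of $Pr(C)$ is computed in its own block.

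First I would isolate, inside each block, the smallest invariant subspace containing the target diagonal entry. Using the explicit $B_{ii},B_{i(i+4)},D_i,E_i$ of Corollary \ref{Corollary: Corollary 1} together with the $M[a]$ form of $C=\Pi(\lambda)$, one checks that $\rho_{1,1}$ couples only to $\rho_{2,1},\rho_{1,2},\rho_{2,2}$: the $2\times2$ diagonal sub-blocks confine the in-column coupling, while the off-diagonal blocks $-i(1-\alpha)I_8+\alpha E_1$ and $-i(1-\alpha)I_8+\alpha D_1$ contribute only the single extra entries $\alpha C_{1,2}$ and $\alpha C_{2,1}$. Hence the target entry obeys a closed $4\times4$ linear system, and by the identical $M[a]$ structure the other three blocks yield the same system with $\mu$ replaced by $\nu,\xi,o$.

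Next I would decouple this $4\times4$ system using the symmetry-adapted variables $P=\rho_{1,1}+\rho_{2,2}$, $Q=\rho_{1,1}-\rho_{2,2}$, $R=\rho_{2,1}+\rho_{1,2}$, $S=\rho_{2,1}-\rho_{1,2}$. A short computation gives $\dot P=0$ (so $P$ stays at its initial value $p(1-p_A)$ in block $A_1$), $\dot R=-\alpha R$ (irrelevant to $\rho_{1,1}=(P+Q)/2$), and a driven planar system for $(Q,S)$ whose homogeneous generator is $-\alpha I_2 + 2i(1-\alpha)\bigl[\begin{smallmatrix}0&1\\1&0\end{smallmatrix}\bigr]$ with eigenvalues $-\alpha\pm 2i(1-\alpha)$. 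Because the prescribed initial state pairs the amplitudes ($\Psi_{0,1}=\Psi_{0,2}$, etc.), all four entries start equal, so $Q(0)=S(0)=0$; solving the driven system then yields the steady part $Q_*=2\alpha^2(2\mu-1)w/c$ with $c=\alpha^2+4(1-\alpha)^2$, plus a transient $e^{-\alpha t}\{\alpha\cos[2(1-\alpha)t]-2(1-\alpha)\sin[2(1-\alpha)t]\}$ carried by the conjugate pair. Reading off $\rho_{1,1}=(P+Q)/2$ gives the per-block contribution $\tfrac{w}{c}[\,4(1-\alpha)^2+2\alpha^2\mu\,]$ minus its transient.

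The last step is to sum the four blocks. Writing $2w_1=p(1-p_A)$, $2w_2=pp_A$, $2w_3=(1-p)(1-p_A)$, $2w_4=(1-p)p_A$, the weights satisfy $\sum_i w_i=\tfrac12$, which collapses the four $2(1-\alpha)^2/c$ pieces into a single $2(1-\alpha)^2/c$, while $\sum_i C_{2i-1,2i}(2w_i)=r$ assembles the belief-weighted term $\alpha^2 r/c$; the same weighting sends the transients into $r\,h(t)$ plus the universal $\cos/\sin$ remainder, after using the identity $\tfrac12-2(1-\alpha)^2/c=\alpha^2/(2c)$ to reconcile the $\cos$ coefficient. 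I expect the main obstacle to be bookkeeping rather than conception: rigorously verifying closure of the $4\times4$ subsystem, and then confirming that the per-block constants recombine \emph{exactly} into the universal factors $c$, $\tfrac12-2(1-\alpha)^2/c$, and $-\alpha(1-\alpha)/c$ together with the single data-dependent quantity $r$ — this is where sign and index slips are most likely to occur.
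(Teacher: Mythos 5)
Your proposal is correct, and it follows the same structural skeleton as the paper's proof --- vectorize, use Corollary \ref{Corollary: Corollary 1} to split $\exp\left[(-i(1-\alpha)\vv{H}+\alpha\vv{L})t\right]$ into $\exp(A_1t)\oplus\cdots\oplus\exp(A_4t)$, observe that each target diagonal entry lives in a closed $4$-dimensional subsystem (the paper's coordinates $1,2,9,10$ of block $A_1$ are exactly your $\rho_{1,1},\rho_{2,1},\rho_{1,2},\rho_{2,2}$), and sum four structurally identical blocks --- but it differs in how the $4\times4$ exponential is evaluated. The paper tracks the first row of $(A_1t)^k$ by recursion and then explicitly diagonalizes $\bigl[\begin{smallmatrix} B_{11} & G_1 \\ F_1 & B_{15}\end{smallmatrix}\bigr]$, writing out $P$ and $P^{-1}$ by hand; you instead pass to the symmetry-adapted variables $P=\rho_{1,1}+\rho_{2,2}$, $Q=\rho_{1,1}-\rho_{2,2}$, $R=\rho_{2,1}+\rho_{1,2}$, $S=\rho_{2,1}-\rho_{1,2}$. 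I checked your key claims against the explicit $4\times4$ generator in the appendix: indeed $\dot P=0$, $\dot R=-\alpha R$, and $(Q,S)$ obeys a driven planar system with homogeneous part $-\alpha I_2+2i(1-\alpha)\bigl[\begin{smallmatrix}0&1\\1&0\end{smallmatrix}\bigr]$ (eigenvalues $-\alpha\pm2i(1-\alpha)$, matching the paper's $-\alpha\mp2\beta$), forcing term $\alpha(2\mu-1)P$ in the $Q$-equation, zero initial data $Q(0)=S(0)=0$, and steady state $Q_*=\alpha^2(2\mu-1)P/c$; the per-block stationary value $\frac{w}{c}\left[4(1-\alpha)^2+2\alpha^2\mu\right]$ and the recombination via $\sum_i w_i=\tfrac12$, $\sum_i(2w_i)C_{2i-1,2i}=r$, and $\tfrac12-\tfrac{2(1-\alpha)^2}{c}=\tfrac{\alpha^2}{2c}$ reproduce the theorem's constant, $r$-dependent, and transient terms exactly. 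What your route buys is transparency: the zero eigenvalue and the $-\alpha$ mode are identified with conservation of the in-block trace and decay of the symmetric coherence before any computation, and no explicit $P^{-1}$ is needed; what the paper's route buys is the full first row of $\exp(A_1t)$ (all of $d_1,d_2,d_9,d_{10}$), which is slightly more information than the diagonal entry alone. The only part you should still write out rigorously is the closure of the $4\times4$ subsystem, which follows from the sparsity of $D_1,E_1$ and the block-diagonal $B_{1j}$ exactly as you indicate.
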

\begin{proof}
Interested readers may refer to Appendix \ref{sec: Proof of Pr(C)}.
\end{proof}
\vspace{-2ex}
\end{figure*}

\begin{remark}
\normalfont
The output of the open quantum model, $Pr(C)$, presented in Theorem \ref{Thrm: Pr(C)} consists of both transient and stationary parts. The transient part consists of sine and cosine multiplied with exponential decay. Thus, there always exists steady state when the exponential decay rate $\alpha > 0$. When $\alpha = 0$ or $\lambda = 0$, $Pr(C)=0$ and the driver is completely random and absent-minded. Furthermore, a driver with a higher $\alpha$ can make a decision faster. Thus $\alpha$ also represents the brain power and attentiveness of the driver. On the other hand, the parameter $\lambda$ appears only within the $C_{ij}$ terms, which are linearly weighted by monomial terms of $p$ and $p_A$. This is essentially prior probability multiplied with likelihood, or initial probability multiplied with transition probability. Viewing $\alpha$ and $p_A$ as constants, the steady-state $Pr(C)$ is a linear function of the driver's prior $p$.
\end{remark}

For brevity, the mixed-strategy equilibrium is denoted as ($p_A^*, p_C^*$). If the car knows that the driver will play $p_C^*$, then its expected payoffs from playing alert and no alert must be equal, otherwise, it either chooses to alert only or chooses not to alert only and does not need to mix between them. Thus we have
\begin{equation}
\begin{split}
p_C^* [a_{1, s}q + a_{1, d}(1-q)] + (1- p_C^*)[b_{1, s}q + b_{1, d}(1-q)]
    \\= p_C^* [c_{1, s}q + c_{1, d}(1-q)] + (1- p_C^*)[d_{1, s}q + d_{1, d}(1-q)].
\end{split}
\end{equation}

Solving for $p_C^*$, we obtain
\begin{equation}
p_C^* = \frac{q\Delta_s + (1-q)\Delta_d}{q (\Delta_s +c_{1, s} -a_{1, s})+ (1-q) (\Delta_d +c_{1, d} -a_{1, d} )}.
\label{Eqn: p_C^*}
\end{equation}
where $\Delta_s = b_{1, s}-d_{1, s}$ and $\Delta_d = b_{1, d}-d_{1, d}$.

For the sake of illustration, we consider the bi-matrix game presented in Table \ref{tab:tab1}. Upon substituting the utility values in Table \ref{tab:tab1-sim} for this example in Equation \eqref{Eqn: p_C^*}, we obtain
\begin{equation}
\begin{array}{lcl}
p_C^* & = & \displaystyle \frac{11-16q}{3q+1}.
\end{array}
\end{equation}
Note that the above $p_C^*$ maybe outside [0, 1]. If so, there is no mixed-strategy equilibrium. In order for $p_C^*$ to lie within [0, 1] under the prescribed utilities, $q$ must lie within [10/19, 11/16]. This is a very narrow range of $q$. Given $p_C^*$, the car can assign any $p_A$ to A because A and N give the same payoff. Next we need to search the $p_A$ that produce $p_C^*$. Such a $p_A$ is just the desired $p_A^*$.

Since $p_C^*$ is completely determined by $q$, it is more convenient to plot $p_A^*$ versus $p$ and $p_C^*$. $p_A^*$ versus $p$ and $p_C^*$ with various $\lambda$s is shown in Fig. \ref{fig:mix}. The mixed-strategy equilibria only exist in a narrow band extending from a low-$p_C^*$-low-$p$ region to a high-$p_C^*$-high-$p$ region. There may not exist a mixed -strategy equilibrium for a given $q$, but there always exists one for a given $p$. When $p_C^*$ and $p$ increase, the band gets narrower. Within the band, the gradient of $p_A^*$ is perpendicular to the band, i.e., $p_A^*$ increases when $p$ increases and $p_C^*$ decreases simultaneously. When $\lambda$ decreases, the band gets flatter and the band firstly widens and then narrows. As $\alpha$ decreases, the band gets flatter and narrower.

\begin{figure*}[htp]
    \centering
    \begin{subfigure}{0.25\textwidth}
      \centering
      \includegraphics[width=\textwidth]{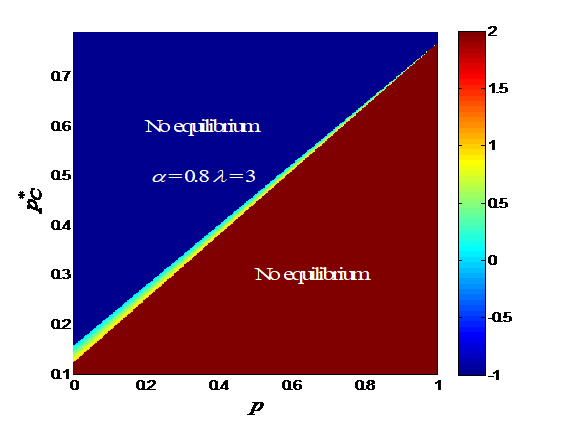}
    \end{subfigure}%
    \begin{subfigure}{0.25\textwidth}
      \centering
      \includegraphics[width=\textwidth]{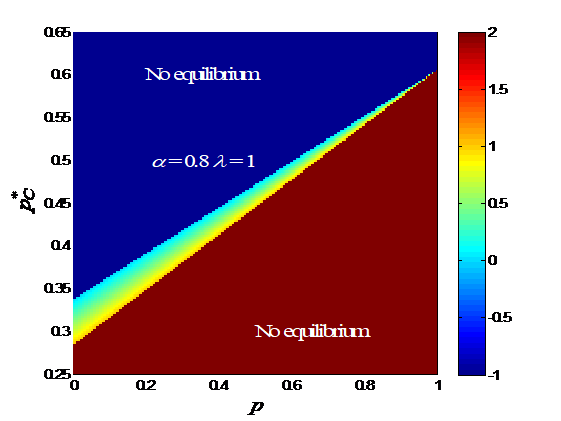}
    \end{subfigure}%
    \begin{subfigure}{0.25\textwidth}
      \centering
      \includegraphics[width=\textwidth]{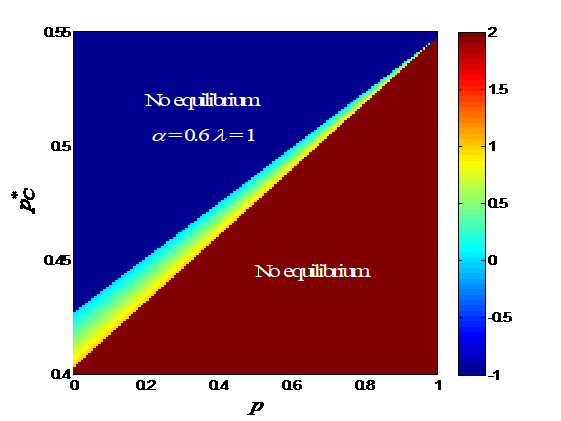}
    \end{subfigure}%
    \begin{subfigure}{0.25\textwidth}
      \centering
      \includegraphics[width=\textwidth]{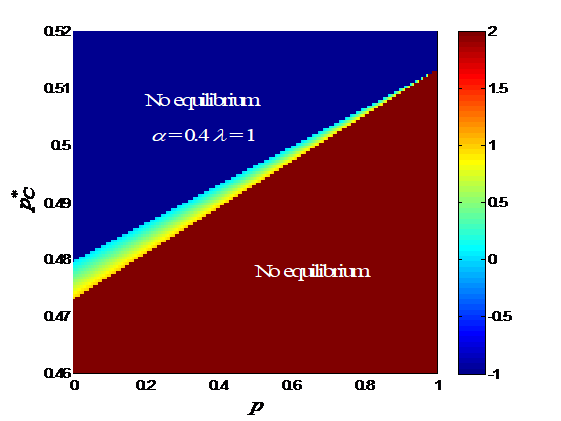}
    \end{subfigure}
  \caption {Existence of $p_A^*$ on the plane of $p$ and $p_C^*$ for different values of $\lambda$ and $\alpha$}
  \label{fig:mix}
\end{figure*}

\begin{remark}
\normalfont 
In the case of mixed equilibria, when the driver is attentive, the equilibrium strategy $P^*_C$ is well aligned with her prior (higher $p$, higher $P^*_C$) as shown in Figure \ref{fig:mix}. However, when the driver gradually loses her attention ($\alpha$ or $\lambda$ decreases), $P^*_C$ steadily approaches to 0.5 regardless of $p$. This means that the driver becomes uncertain to choose $C$ or $S$ at equilibrium, when she is inattentive.
\end{remark}

\section{Conclusion and Future Work}
\label{section:conclusion}
In this paper, we developed a strategic driver-assist system based on a novel vehicle-driver interaction game. While the car is modeled as an expected utility maximizer, the driver is characterized by open-quantum cognition model which models his/her attentiveness ($\alpha$) as well as sensitivity to the utilities ($\lambda$). Based on a novel equilibrium concept proposed to solve any general human-system interaction game, we showed that both the car and the driver employ a threshold-based rule on their respective priors regarding the road state at equilibrium. Through numerical results, we also demonstrated how these thresholds vary under different settings based on road conditions and agent behavior. Specifically, in our proposed framework, we showed that an inattentive driver $(\lambda \leq 1)$ would stop the car in about 65\% of all possible belief profile settings, and at least in 77\% of belief profiles settings when the car alerts the driver. On the contrary, if there were no driver-assist system in the car, an inattentive driver would have stopped the car only in 50\% of all possible belief profiles settings (the region where $p < 0.5$), and in about 38.5\% of all scenarios if the car were to alert the driver using our driver-assist system. 
% as opposed to only 50\% of belief profile settings when there is no driver-assist system in the car.  
% we showed that there is a threshold on the driver's prior above which the driver will always continue driving the car upon receiving an alert signal at equilibrium. The same driver would stop the car when he did not receive any alert signal when the driver's prior is above a different threshold. On the other hand, the car also uses a threshold-based rule on its own prior
% the pure strategy equilibrium of the game 
% as the driver's prior increases.
% The attentiveness of the driver is described by the weight on pursuing the utility ($\alpha$) and the sensitivity to utility ($\lambda$)51. The output of the open-quantum cognition model is Pr(C).The pure strategy of the driver was obtained by cutting off Pr(C) at 0.5.
% Based on this novel equilibrium notion, 
% we find four pure-strategy equilibria: $(S, N)$, $(C, N)$, $(S, A)$ and $(C, A)$, with various priors regarding road safety at the driver ($p$) and the car ($p$). 
At the same time, our proposed driver-assist system has improved persuasive ability by taking into account driver behavior, in addition to its inferences regarding the road state. This improvement in performance was demonstrated by the increase in threshold on a driver-conscient car's belief, as opposed to that of a driver-agnostic car.
% As the sensitivity of the driver drops, we showed that the threshold between $(S, A)$ and $(C, A)$ shifts towards a higher prior. When the car makes decisions solely based on its machine learning algorithm ($q$) without caring about the utilities and the driver, the border between Alert and No Alert in the pure-strategy equilibrium plot is always $q=0.5$, regardless of the driver's equilibrium strategy, or equivalently, prior ($p$). When the car makes decisions by maximizing expected payoffs, the border between Alert and No Alert depends on the driver's prior: When the driver is of high prior, the border is close to $q=0.5$; When the driver is of low prior, the border is close to $q=0.7$. This distinction indicates that it is inadequate for the car to make decisions solely based on its onboard machine learning algorithm and it is necessary to incorporate the driver's decision making into account to make a competent drive-assist system. 
Furthermore, we also proved that there always exists a mixed strategy equilibrium for any given driver's prior, but only under a small range within car's prior values. As the driver loses attention, we demonstrated that the mixed strategy of the driver at equilibrium drifts towards uniformly distributed probabilistic decisions. In the future, we will investigate repeated interaction games where the car can learn driver's model parameters over multiple iterations. Furthermore, we will also incorporate $B$ matrix 
% by setting a non-zero $\phi$ in the computation of $\gamma_{(m,n)}$ 
in the Lindblad model to account for the effects of mental deliberation in resolving conflicts between his/her own prior and the car's signal.

\section{Acknowledgment}

Dr. S. N. Balakrishnan just passed away before this paper is ready to submit. However, Dr. Balakrishnan participated in every aspect through the research related to this paper. Therefore, we decided to still keep Dr. Balakrishnan as the co-author of this paper.

\bibliographystyle{unsrt}
\bibliography{game}
\newpage
\appendices

\section{Proof of Theorem \ref{Theorem: Theorem 1}\label{sec: Proof of theorem 1}}

There are totally 64 terms of $\gamma_{m,n}\Lambda_{m,n}$ to be summed in Equation \eqref{eq:vectorize} when $\gamma_{m,n}=C_{m,n}$. We only need to consider terms with $C_{m,n}\neq 0$. The nonzero entries of $C_{m,n}$ are: $(n,n)$ and $(n+1,n)$ when $n$ is odd, $(n,n)$ and $(n-1,n)$ when $n$ is even. There are only 16 nonzero terms. 

Consider the $(n,n)$ terms. According to Lemma \ref{Lemma: Lemma 2}, $C_{n,n}\Lambda_{n,n}$ has only 14 nonzero entries. They are all $-C_{n,n}/2$ at the $(M,M)$ entries where $M=8(n-1)+k_1$ or $M=8(k_2-1)+n$, $k_1$, $k_2=1,2,...,8$ but $k_1$ and $k_2$ cannot be $n$ at the same time. Each $(M,M)$ entry of $\sum_{n=1}^8C_{n,n}\Lambda_{n,n}$ has exactly two $C_{n,n}\Lambda_{n,n}$'s contributing to it: one with $n = (M-1) \mod 8 + 1$, and the other with $n = \lfloor \frac{M-1}{8}+1 \rfloor$. That is, the $(M,M)$ entry of $\sum_{n=1}^8C_{n,n}\Lambda_{n,n}$ is
\begin{equation*}
\Big(\sum_{n=1}^8C_{n,n}\Lambda_{n,n}\Big)_{M,M} =
\begin{cases}
      (C_{n,n}+C_{l,l})/2, & n\neq l \\
      0, & n=l
\end{cases}
\end{equation*}
where $n = \lfloor \frac{M-1}{8}+1 \rfloor$ and $l = (M-1) \mod 8 + 1$. Note that $n=l$ when $M=9n-8$.

Next consider the $(n\pm 1,n)$ terms. According to Lemma \ref{Lemma: Lemma 2}, $C_{n\pm 1,n}\Lambda_{n\pm 1,n}$ has 16 nonzero entries. 14 of them are $(M,M)$ entries taking value $-C_{n\pm 1,n}/2$ where $M=8(n-1)+k_1$ or $M=8(k_2-1)+n$, $k_1$, $k_2=1,2,...,8$ but $k_1$ and $k_2$ cannot be $n$ at the same time. One of them is $-C_{n\pm 1,n}$ at the $(9n-8,9n-8)$ entry. The last one is $C_{n\pm 1,n}$ at the $(9(n\pm 1)-8, 8(n-1)+n)$ entry. We only need to sum the $C_{n+1,n}\Lambda_{n+1,n}$ terms when $n$ is odd and only need to sum the $C_{n-1,n}\Lambda_{n-1,n}$ terms when $n$ is even. The $(M,M)$ entry of the sum of all $C_{n\pm 1,n}\Lambda_{n\pm 1,n}$ terms is
\begin{equation*}
\begin{cases}
      -(C_{n+1,n}+C_{l+1,l})/2, & n\neq l, n \text{ is odd} \\
      -C_{n+1,n}, & n=l, n \text{ is odd} \\
      -(C_{n-1,n}+C_{l-1,l})/2, & n\neq l, n \text{ is even} \\
      -C_{n-1,n}, & n=l, n \text{ is even}
\end{cases}
\end{equation*}
where $n = \lfloor \frac{M-1}{8}+1 \rfloor$ and $l = (M-1) \mod 8 + 1$.

Summing the $C_{n,n}\Lambda_{n,n}$ terms and the $C_{n\pm 1,n}\Lambda_{n\pm 1,n}$ terms of $\vv{L}$ leads to the claimed result.

\section{Proof of Theorem \ref{Thrm: Pr(C)}\label{sec: Proof of Pr(C)}}

Let $\beta=-i(1-\alpha)$. The solution is in the form of $\vv{\rho}_t=\vv{\rho}(t)=\mathrm{exp}((\beta\vv{H}+\alpha\vv{L})t)\rho_0$. 

By Corollary 1,
\begin{equation}
\begin{array}{l}
\exp((\beta\vv{H}+\alpha\vv{L})t) = \displaystyle \sum^\infty_{k=0}(\beta\vv{H}+\alpha\vv{L})^kt^k/k!
\\[2ex]
\qquad = \ \displaystyle \sum^\infty_{k=0}A_1^kt^k\oplus A_2^kt^k\oplus A_3^kt^k\oplus A_4^kt^k/k!
\\[3ex]
\qquad = \ \displaystyle \mathrm{exp}(A_1t)\oplus\mathrm{exp}(A_2t)\oplus\mathrm{exp}(A_3t)\oplus\mathrm{exp}(A_4t).
\end{array}
\end{equation}

By Equation \eqref{eq:PrCsum} and with the indices defined in Equation \eqref{eq:PI}, 
\begin{equation}
Pr(C) = \displaystyle \vv{\rho}_t(1)+\vv{\rho}_t(19)+\vv{\rho}_t(37)+\vv{\rho}_t(55) 
\end{equation}
where $\vv{\rho}_t(i)$ is the \textit{i}th element of vector $\vv{\rho}_t$.

Calculating $\vv{\rho}_t(1)$  only needs the first row of $\mathrm{exp}(A_1t)$.

We will calculate $\mathrm{exp}(A_1t)$ from $(A_1t)^k$. Again only the first row of $(A_1t)^k$ is needed. Given $(A_1t)^k$, to calculate the first row of $(A_1t)^{(k+1)}$  = $(A_1t)^kA_1t$, still only the first row of $(A_1t)^k$ is needed. Let the first row of $(A_1t)^k$ be denoted as $v_k$. Then $v_{k+1} = v_k A_1$. Denote the \textit{j}th element of $v_k$ as $v_{k,j}$. Then by Corollary \ref{Corollary: Corollary 1}, 
\begin{equation}
\label{eq:v1}
\begin{array}{l}
\;\;\;\;[v_{k+1,2j-1}\; v_{k+1,2j}]
\\[1ex]
=[v_{k,2j-1}\; v_{k,2j}]B_{1j} + [v_{k,2j+7}\; v_{k,2j+8}]F_j,
\end{array}
\end{equation}
\begin{equation}
\label{eq:v2}
\begin{array}{l}
\;\;\;\;[v_{k+1,2j+7}\; v_{k+1,2j+8}]
\\[1ex]
=[v_{k,2j-1}\; v_{k,2j}]G_j + [v_{k,2j+7}\; v_{k,2j+8}]B_{1(j+4)},
\end{array}
\end{equation}
where $j=1,2,3,4$, 
\begin{equation*}
F_1=\begin{bmatrix} \beta & 0 \\\alpha C_{2,1} & \beta\end{bmatrix},\;G_1=\begin{bmatrix} \beta & \alpha C_{1,2} \\0 & \beta\end{bmatrix} 
\end{equation*}
and $F_j=G_j=I_2$ for $j=2,3,4$.

Combining Equation \eqref{eq:v1} and Equation \eqref{eq:v2}, we have
\begin{equation}
\label{eq:iter}
\begin{array}{l}
\;\;\;\;[v_{k+1,2j-1}\; v_{k+1,2j}\; v_{k+1,2j+7}\; v_{k+1,2j+8}]
\\[1ex]
=[v_{k,2j-1}\; v_{k,2j}\; v_{k,2j+7}\; v_{k,2j+8}]\begin{bmatrix} B_{1j} & G_j \\F_j & B_{1(j+4)}\end{bmatrix}.
\end{array}
\end{equation}

Use Equation \eqref{eq:iter} from 1 to $n$, we have
\begin{equation}
\label{eq:vall}
\begin{array}{l}
\;\;\;\;[v_{k,2j-1}\; v_{k,2j}\; v_{k,2j+7}\; v_{k,2j+8}]
\\[1ex]
=[v_{1,2j-1}\; v_{1,2j}\; v_{1,2j+7}\; v_{1,2j+8}]\begin{bmatrix} B_{1j} & G_j \\F_j & B_{1(j+4)}\end{bmatrix}^{k-1}.
\end{array}
\end{equation}

By Corollary \ref{Corollary: Corollary 1}, 
\begin{equation*}
v_1= \begin{bmatrix}
-\alpha C_{2,1} & -\beta & 0_{1\times6} & \beta & \alpha C_{1,2}
\end{bmatrix}
\end{equation*}
where $0_{1\times k}$ is a $1\times k$ column vector. Plugging the initial values in Equation \eqref{eq:vall}, we have
\begin{equation}\label{eq: B11}
\begin{array}{l}
\;\;\;\;[v_{k,1}\; v_{k,2}\; v_{k,9}\; v_{k,10}]
\\[1ex]
=[-\alpha C_{2,1}\; -\beta\; \beta\; \alpha C_{1,2}]\begin{bmatrix} B_{11} & G_1 \\F_1 & B_{15}\end{bmatrix}^{k-1}.
\end{array}
\end{equation}
The other $v_{k,j}$'s with $j\neq 1,2,9,10$ are all zero.

Similarly, calculating $\vv{\rho}_t(19)$, $\vv{\rho}_t(37)$, and $\vv{\rho}_t(55)$ only needs the 3rd row of $\mathrm{exp}(A_2t)$, the 5th row of $\mathrm{exp}(A_3t)$, the 7th row of $\mathrm{exp}(A_4t)$, respectively. Denote the 3rd row of $A_2^k$, the 5th row of $A_3^k$, and the 7th row of $A_4^k$ as $x_k$, $y_k$, and $z_k$, respectively. Then $x_{k+1}=x_kA_2$, $y_{k+1}=y_kA_3$, and $z_{k+1}=z_kA_4$. By Corollary \ref{Corollary: Corollary 1}, 
\begin{equation*}
x_1 = \begin{bmatrix}
0 & 0 & -\alpha C_{4,3} & -\beta & 0_{1\times6}
& \beta & \alpha C_{3,4} & 0_{1\times4}
\end{bmatrix},
\end{equation*}
\begin{equation*}
y_1= \begin{bmatrix}
0_{1\times4} & -\alpha C_{6,5} & -\beta & 0_{1\times6} & \beta & \alpha C_{5,6} & 0 & 0
\end{bmatrix},
\end{equation*}
\begin{equation*}
z_1= \begin{bmatrix}
0_{1\times6} & -\alpha C_{8,7} & -\beta & 0_{1\times6} & \beta & \alpha C_{7,8}
\end{bmatrix}.
\end{equation*}

Denote the \textit{j}th elements of $x_k$, $y_k$, and $z_k$ as $x_{k,j}$, $y_{k,j}$, and $z_{k,j}$, respectively. Then the nonzero elements can be calculated as
\begin{equation}
\begin{array}{l}
\;\;\;\;[x_{k,3}\; x_{k,4}\; x_{k,11}\; x_{k,12}]
\\[1ex]
=[-\alpha C_{4,3}\; -\beta\; \beta\; \alpha C_{3,4}]\begin{bmatrix} B_{22} & G_2 \\F_2 & B_{26}\end{bmatrix}^{k-1},
\end{array}
\end{equation}
\begin{equation}
\begin{array}{l}
\;\;\;\;[y_{k,5}\; y_{k,6}\; y_{k,13}\; y_{k,14}]
\\[1ex]
=[-\alpha C_{6,5}\; -\beta\; \beta\; \alpha C_{5,6}]\begin{bmatrix} B_{33} & G_3 \\F_3 & B_{37}\end{bmatrix}^{k-1},
\end{array}
\end{equation}
\begin{equation}
\begin{array}{l}
\;\;\;\;[z_{k,7}\; z_{k,8}\; z_{k,15}\; z_{k,16}]
\\[1ex]
=[-\alpha C_{8,7}\; -\beta\; \beta\; \alpha C_{7,8}]\begin{bmatrix} B_{44} & G_4 \\F_4 & B_{48}\end{bmatrix}^{k-1},
\end{array}
\end{equation}
In Equation \eqref{eq: B11},
\begin{equation}
    \begin{bmatrix} B_{11} & G_1 \\F_1 & B_{15}\end{bmatrix}=\begin{bmatrix} -\alpha C_{2,1} & -\beta & \beta & \alpha C_{1,2} \\-\beta & b_1 & 0 & \beta \\\beta & 0 & b_5 & -\beta\\\alpha C_{2,1} & \beta & -\beta & -\alpha C_{1,2}\end{bmatrix}
\end{equation}
where 
\begin{equation*}
b_1=-\frac{\alpha}{2}(C_{1,1}+C_{2,2}+C_{2,1}+C_{1,2})=-\alpha
\end{equation*}
and 
\begin{equation*}
b_5=-\frac{\alpha}{2}(C_{2,2}+C_{1,1}+C_{1,2}+C_{2,1})=-\alpha.
\end{equation*}
Thus it can be diagonalized as 
\begin{equation}
    \begin{bmatrix} B_{11} & G_1 \\F_1 & B_{15}\end{bmatrix}=P\begin{bmatrix} -\alpha -2\beta &  &  &  \\ & -\alpha +2\beta &  &  \\ &  & -\alpha & \\  &  &  & 0\end{bmatrix}P^{-1}
\end{equation}
where  
\begin{equation*}
P=\begin{bmatrix} 1 & 1 & 0 & \alpha^2C_{1,2}-2\beta^2 \\1 & -1 & 1 & -\alpha\beta (C_{1,2}-C_{2,1}) \\-1 & 1 & 1 & \alpha\beta (C_{1,2}-C_{2,1})\\ -1 & -1 & 0 & \alpha^2C_{2,1}-2\beta^2\end{bmatrix}
\end{equation*}
and 
\begin{equation*}
P^{-1}=\begin{bmatrix} \frac{1}{2}-\frac{b}{c} & \frac{1}{4} & -\frac{1}{4} & -\frac{b}{c} \\\frac{1}{2}-\frac{a}{c} & -\frac{1}{4} & \frac{1}{4} & -\frac{a}{c}\\0 & \frac{1}{2} & \frac{1}{2} & 0\\ \frac{1}{c} & 0 & 0 & \frac{1}{c}\end{bmatrix}
\end{equation*} 
with
\begin{equation*}
a=\frac{\alpha\beta}{2}(C_{1,2}-C_{2,1})+\frac{\alpha^2}{2}C_{1,2}-\beta^2
\end{equation*}
\begin{equation*}
b=-\frac{\alpha\beta}{2}(C_{1,2}-C_{2,1})+\frac{\alpha^2}{2}C_{1,2}-\beta^2,
\end{equation*}
and $c=\alpha^2-4\beta^2$. Thus
\begin{equation}
\exp(\begin{bmatrix} B_{11} & G_1 \\F_1 & B_{15}\end{bmatrix}t)=Pe^{-\alpha t}\begin{bmatrix} e^{-2\beta t} &  &  &  \\ & e^{2\beta t} &  &  \\ &  & 1 & \\  &  &  & e^{\alpha t}\end{bmatrix}P^{-1}
\end{equation}

The nonzero elements (the 1st, 2nd, 9th, 10th elements, denoted as $d_1$, $d_2$, $d_9$, $d_{10}$) of the 1st row of $\exp(A_1t)$ are the 1st row of $\exp(\begin{bmatrix} B_{11} & G_1 \\F_1 & B_{15}\end{bmatrix}t)$. The nonzero elements (the 3rd, 4th, 11th, 12th elements, denoted as $e_3$, $e_4$, $e_{11}$, $e_{12}$) of the 3rd row of $\exp(A_2t)$ are the 1st row of $\exp(\begin{bmatrix} B_{22} & G_2 \\F_2 & B_{26}\end{bmatrix}t)$. The nonzero elements (the 5th, 6th, 13th, 14th elements, denoted as $f_5$, $f_6$, $f_{13}$, $f_{14}$) of the 5th row of $\exp(A_3t)$ are the 1st row of $\exp(\begin{bmatrix} B_{33} & G_3 \\F_3 & B_{37}\end{bmatrix}t)$.  The nonzero elements (the 7th, 8th, 15th, 16th elements, denoted as $g_7$, $g_8$, $g_{15}$, $g_{16}$) of the 5th row of $\exp(A_4t)$ are the 1st row of $\exp(\begin{bmatrix} B_{44} & G_4 \\F_4 & B_{48}\end{bmatrix}t)$. Thus
\begin{equation}
\begin{array}{lcl}
\vv{\rho}_t(1) & = & \vv{\rho}_0(1)d_1+\vv{\rho}_0(2)d_2+\vv{\rho}_0(9)d_9+\vv{\rho}_0(10)d_{10} 
\\[2ex]
& = & \rho_0(1,1)d_1+\rho_0(2,1)d_2+\rho_0(1,2)d_9+\rho_0(2,2)d_{10}
\\[2ex]
& = & \rho_0(1,1)d_1+\rho_0(2,2)d_{10}
\\[2ex]
& = & \displaystyle p \left(\frac{1-p_A}{2} \right) d_1 + p \left(\frac{1-p_A}{2} \right) d_{10}
\\[2ex]
& = & \displaystyle p(1-p_A) \left[ \frac{1}{4}e^{-\alpha t}(e^{-2\beta t}+e^{2\beta t})+d_{10} \right]
\end{array}
\end{equation}
where $d_{10}=\frac{\alpha^2C_{1,2}-2\beta^2}{c}-\frac{e^{-\alpha t}}{c}(ae^{2\beta t}+be^{-2\beta t})$. Similarly, we have 
\begin{equation}
\vv{\rho}_t(19)= p \cdot p_A \left[ \frac{1}{4}e^{-\alpha t}(e^{-2\beta t}+e^{2\beta t})+e_{12} \right]
\end{equation}
\begin{equation}
\vv{\rho}_t(37)= (1-p)(1-p_A) \left[ \frac{1}{4}e^{-\alpha t}(e^{-2\beta t}+e^{2\beta t})+f_{14} \right]
\end{equation}
\begin{equation}
\vv{\rho}_t(55)= p \cdot p_A \left[ \frac{1}{4}e^{-\alpha t}(e^{-2\beta t}+e^{2\beta t})+g_{16} \right]
\end{equation}
where 
\begin{equation*}
e_{12}=\frac{\alpha^2C_{3,4}-2\beta^2}{c}-\frac{e^{-\alpha t}}{c}(ae^{2\beta t}+be^{-2\beta t}),
\end{equation*}
\begin{equation*}
f_{14}=\frac{\alpha^2C_{5,6}-2\beta^2}{c}-\frac{e^{-\alpha t}}{c}(ae^{2\beta t}+be^{-2\beta t}),
\end{equation*}
\begin{equation*}
g_{16}=\frac{\alpha^2C_{7,8}-2\beta^2}{c}-\frac{e^{-\alpha t}}{c}(ae^{2\beta t}+be^{-2\beta t}).
\end{equation*}

Summing up $\vv{\rho}_t(1)$, $\vv{\rho}_t(19)$, $\vv{\rho}_t(37)$, and $\vv{\rho}_t(55)$ leads to the claimed $Pr(C)$.

\end{document}